\newcommand{\crit}{\text{crit}}
\theoremstyle{definition} 
\theoremstyle{definition} 
\theoremstyle{definition} 
\theoremstyle{definition} 
\theoremstyle{definition} \newtheorem{definition}{Definition}
\theoremstyle{definition} 
\theoremstyle{definition} \newtheorem{lemma}{Lemma}
\theoremstyle{definition} 
\theoremstyle{definition} 
\theoremstyle{definition} 
\theoremstyle{definition}\newtheorem{proposition}{Proposition}
\theoremstyle{definition} 
\theoremstyle{definition} 
\theoremstyle{definition} \newtheorem{assumption}{Assumption}
\theoremstyle{definition} 
\theoremstyle{definition} 
\theoremstyle{definition} 
\theoremstyle{definition}
\long\def\symbolfootnote[#1]#2{\begingroup%
\def\thefootnote{\fnsymbol{footnote}}\footnote[#1]{#2}\endgroup}
\newcommand{\documenttitle}{Thesis}
\newcommand{\argmax}{\operatornamewithlimits{argmax}}
\renewcommand{\max}{\operatornamewithlimits{max}}
\newcommand{\be}{\begin{equation}}
\newcommand{\ee}{\end{equation}}
\newcommand{\bes}{\begin{equation*}}
\newcommand{\ees}{\end{equation*}}
\newcommand{\rdots}{\mathinner{%
  \mkern1mu\raise1pt\hbox{.}%
  \mkern2mu\raise4pt\hbox{.}%
  \mkern2mu\raise7pt\vbox{\kern7pt\hbox{.}}\mkern1mu}}
\newcommand*{\addFileDependency}[1]{
  \typeout{(#1)}
  \@addtofilelist{#1}
  \IfFileExists{#1}{}{\typeout{No file #1.}}
}
\newcommand*{\myexternaldocument}[2]{
    \externaldocument[#2]{#1}
    \addFileDependency{#1.tex}
    \addFileDependency{#1.aux}
}
\begin{document}

\title[]{Corporate Culture and Organizational Fragility}

\author{Matthew Elliott \and Benjamin Golub \and Mathieu V. Leduc}

\thanks{Elliott: mle30@cam.ac.uk, University of Cambridge. Golub: ben.golub@gmail.com, Northwestern University. Leduc (corresponding author): mattvleduc@gmail.com, Paris School of Economics \& Université Paris 1. This project has received funding from the European Research Council (ERC) under the European Union's Horizon 2020 research and innovation programme (grant agreement number \#757229) and the JM Keynes Fellowships Fund (Elliott); the Joint Center for History and Economics; the Pershing Square Fund for Research on the Foundations of Human Behavior; and the National Science Foundation under grant SES-1629446 (Golub); Grant ANR-17-EURE-0001 of the Agence Nationale de la Recherche (ANR) (Leduc). We thank (in random order)  Willemien Kets, Tristan Tomala, Rapha\"{e}l Levy, Nicholas Vielle, Fr\'{e}d\'{e}ric Koessler, Yann Bramoull\'{e}, Agnieszka Rusinowska, Spencer Pantoja, Frederic Dero\"{i}an, Gabrielle Demange, Robert Gibbons, Isabelle M\'{e}jean and Francis Bloch for comments.}

\begin{abstract}
Complex organizations accomplish tasks through many steps of collaboration among workers. Corporate culture supports collaborations by establishing norms and reducing misunderstandings. Because a strong corporate culture relies on costly, voluntary investments by many workers, we model it as an organizational public good, subject to standard free-riding problems, which become severe in large organizations. Our main finding is that  voluntary contributions to culture can nevertheless be sustained, because an organization's equilibrium productivity is endogenously highly sensitive to individual contributions. However, the completion of complex tasks is then necessarily fragile to small shocks that damage the organization's culture.
\end{abstract}

\date{\today}

\maketitle
\bigskip

{\centering \textbf{Keywords:} Corporate Culture, Networks, Fragility }
\medskip

\newpage

\section{Introduction}
For a large organization, such as a corporation, to successfully complete a complex project, such as designing a new product and bringing it to market, many constituent tasks must all be successfully completed. A typical such task can be completed only if several other, tailored input tasks are completed. Thus, a complex project requires many collaborations among workers to succeed, both within and across business units.

Collaborations may fail for many reasons: misunderstandings, insufficient or misallocated effort, a lack of trust, agency problems, and so on \citep{kreps1990corporate}. Corporate culture can mitigate these problems, e.g., by establishing and enforcing norms and supporting relational contracts.\footnote{See, e.g.,   \cite{nguyen2018trust} and \citet{graham2017corporate} for qualitative discussions and extensive empirical analysis.} It is thus an important determinant of successful collaborations within organizations.\footnote{See, e.g., \cite{kotter2008corporate}. Similarly, \cite{Groysberg2018} find a positive association between the strength of a corporate culture, measured as employees' agreement about the culture's characteristics, and the efforts of workers.} We posit that collaborations are more likely to succeed in a better corporate culture---a  broad  notion that entails many aspects of the working environment. 

Organizational culture is endogenous: its quality depends on individuals' costly efforts to maintain it, e.g., by articulating and adhering to organizational values, communicating relevant expectations and practices, and enforcing norms.\footnote{See \cite{rahmandad2016capability} for a perspective on corporate culture, motivated by extensive evidence,  as a stock variable that erodes and needs to be replenished in a changing environment. \citet*{benabou2018narratives} model individual contributions supporting norms and narratives.} Such efforts can be viewed as costly contributions to a public good. They are thus subject to a free-riding problem; this problem becomes more severe as each worker's contribution becomes relatively small. Indeed, a standard analysis would suggest that workers' incentives to make voluntary contributions to any genuinely corporate (as opposed to more local) culture vanish as an organization becomes large, because their marginal impact becomes negligible while their marginal cost does not. This raises an important question: Why do workers exert voluntary effort to enhance the corporate cultures of large organizations? 

We propose a perspective on these questions based on a network model of complex production within a large organization, adapting the model of \cite*{elliott2022supply}. The productive activity of the organization occurs via the completion of tasks. A worker completing a  task typically relies on the completion of several types of essential subtasks; these are incorporated into the task via \emph{collaborations} with other workers, whose success depends on the quality of the ambient corporate culture (among other determinants). Because any collaboration can fail, there are several substitutable subtasks of a given type. Each subtask can itself require further subtasks  (reliant on other collaborations), and so on. 

As an illustration, consider a  marketing analyst performing the task of running surveys to market-test a potential feature. For this he needs input from engineers to supply technical specifications and input from data analysts  on prices to test---two types of subtask. For each of these, the marketing analyst has access to several workers who can provide a subtask of that type. In turn, these workers rely on other organizational units. For instance, a data analyst needs the help of accounting staff to provide relevant records and a developer to write code.  Thus, the marketing analyst's task relies, directly and indirectly, on many ``upstream'' successful collaborations. We call a task complex if it is dependent on many levels of collaboration, with multiple types of subtasks being combined at each level.

We study how the level of corporate culture affects the probability of completing  complex projects in such a network of overlapping collaborations.\footnote{Recall that corporate culture supports the performance of each collaboration.} We show that the probability of achieving a given complex task  discontinuously increases from zero to a large positive number   when the corporate culture increases past a threshold level of strength. Organizational performance is very sensitive to culture around this threshold. This implies a new mechanism generating incentives for voluntary, decentralized investment in culture even as an organization becomes large and the free-rider problem becomes severe: the marginal impact of contributions becomes large enough to compensate.

The analysis also implies a distinctive kind of fragility: In an equilibrium with positive contributions supported by this mechanism, the organization's performance must be very sensitive to an exogenous negative  shock to corporate culture. This shock could be, for instance, a merger or a change in the company's top management (such as the arrival of a new CEO).\footnote{Indeed, the incompatibility of different corporate cultures is often blamed for failed mergers (\cite{cartwright1993role} and \cite{kotter2008corporate}).} 
 The theory thus suggests a novel account of how the complexity of an organization both sustains incentives for investment and makes it vulnerable, in equilibrium, to cultural disruptions. Section 4 discusses other implications and their relation to evidence.

Our approach relates to a literature on network theory and the provision of public goods \citep{bramoulle2007public}, but builds on the distinctive fragility properties of large networks \citep*{brummitt2017contagious,elliott2022supply,konig2022aggregate,blume2013network}. At a conceptual level, a main message of our work is that there is an interesting interaction between these properties of complex production processes and the incentives of the agents involved, a theme also explored in recent work by, e.g., \cite{levine2012production}, \cite{erol2022network}, and \cite{dasaratha2019innovation}. In Section \ref{sec:other-theories}, we discuss how our work relates to the existing literature on corporate culture.

\section{A task-based model of a large organization}\label{sec:corporations}

\subsection{Model}

\subsubsection{A network of tasks} \label{sec:formal_model} There is a directed, acyclic network $G=(V,E)$ in which each node $v \in V$ is a task and edges represent input relationships between tasks according to a technology that we now describe. There is a set $S_v \subseteq V$ of inputs for each task $v$. If $S_v$ is nonempty, it is partitioned into $m$ \emph{types} of inputs, $$ S_v = S_{v,1} \cup S_{v,2} \cup \cdots \cup S_{v,m}.$$ Each $S_{v,t}$ has cardinality $n$, and these sets are disjoint. The interpretation is that each type $t$ corresponds to a different kind of input needed (e.g., engineering specifications, pricing data analysis) and, for each such input, there are $n$ distinct but substitutable tasks that can provide that input. For the modeling here we do not need to formally associate workers with tasks, but a natural interpretation is that different substitutable subtasks are associated with different workers. However, the same workers may perform multiple tasks throughout a network. We discuss worker payoffs in Section 3.\footnote{In the Online Appendix, we embed this simple model in a more general one---where each task can require a different number of subtasks and a different number of potential task providers---and show that the results presented throughout this paper are robust to such elaborations. }

Directed links $(v,v') \in E$ between $v$ and some $v' \in S_v$ are called \emph{collaborations} (note that links are directed from tasks to their inputs). The interpretation is that a worker performing task $v$ can receive the result of task $v'$, performed by another worker, as an input. Each such collaboration may be \emph{operational} (denoted by $\phi_{vv'}=1$) or not ($\phi_{vv'}=0$).

Each task may be successful or not; let $s_v$ be the indicator variable of whether task $v$ is successful. The main technological assumption is that a task is successful if and only if, for each $t\in\{1,\ldots,m\}$, there is at least one subtask $v'\in S_{v,t}$ which is successful and such that $\phi_{vv'}=1$, so that there is an operational collaboration for $v$ to source the result of the task. Given a realization ${\phi}$ of which links are operational, we let ${s}^*({\phi})$ be the maximal vector ${s}$ consistent with the technology of production just described; such a vector exists by Tarski's fixed point theorem. This gives the set of all tasks that can be completed given exogenous constraints.

We assume there is a ``root'' node $M$ that is not an input into any other task. This can be interpreted as the main task. A node not requiring any inputs is called a leaf. We say a  network has $L<\infty$ layers if the distance from $M$ to each leaf\footnote{Note this definition imposes that each path has the same length---which is done for simplicity, but could be relaxed.} is $L-1$. In that model, all leaf tasks are always successful. We say the network has $L=\infty$ if it has no leaves---an idealized model of highly complex operation where there is no definite point at which interdependencies are ``cut off.''

\subsubsection{An illustration} Figure \ref{fig:m-n-2}(A) illustrates part of a task network. Each task requires the completion of two types of subtasks. Each $S_{v,t}$ contains two substitutable tasks. The task $M$ needs a type  $a$ and a type $b$  task as inputs. Either task $a1$ or $a2$ can serve the purpose, and similarly for task type  $b$, and so on.

Figure \ref{fig:m-n-2}(B) illustrates a possible realization of which collaborations would be operational if needed: these are represented by the retained links relative to Figure \ref{fig:m-n-2}(A). Note that although the collaboration $(M,a2)$ 
is operational, the task $M$ cannot be completed. This is because task $a2$ cannot itself be completed.

\begin{figure}[ht]

\subfloat[]{
        \centering
                \begin{tikzpicture}[baseline={([yshift=-.5ex]current bounding box.center)},scale=0.5, every node/.style={transform shape}]
                \SetVertexNormal[Shape      = circle,
                FillColor = white,
                LineWidth  = 1pt]
                \SetUpEdge[lw         = 1pt,
                color      = black,
                labelcolor = white]

                \tikzset{node distance = 1.6in}

                \tikzset{VertexStyle/.append  style={fill}}
                \Vertex[x=0,y=0]{M}
                \Vertex[x=-8,y=-3]{a1}
                \Vertex[x=-5,y=-3]{a2}
                \Vertex[x=8,y=-3]{b2}
                \Vertex[x=5,y=-3]{b1}
                \Vertex[x=-15,y=-6]{c1}
                \Vertex[x=-13,y=-6]{c2}
                \Vertex[x=-11,y=-6]{d1}
                \Vertex[x=-9,y=-6]{d2}
                \Vertex[x=-7,y=-6]{e1}
                \Vertex[x=-5,y=-6]{e2}
                \Vertex[x=-3,y=-6]{f1}
                \Vertex[x=-1,y=-6]{f2}

                \Vertex[x=15,y=-6]{j1}
                \Vertex[x=13,y=-6]{j2}
                \Vertex[x=11,y=-6]{i1}
                \Vertex[x=9,y=-6]{i2}
                \Vertex[x=7,y=-6]{h1}
                \Vertex[x=5,y=-6]{h2}
                \Vertex[x=3,y=-6]{g1}
                \Vertex[x=1,y=-6]{g2}

                \tikzset{EdgeStyle/.style={-}}
                \Edge[](M)(a1)
                \Edge[](M)(a2)
                \Edge[](M)(b1)
                \Edge[](M)(b2)
                \Edge[](a1)(c1)
                \Edge[](a1)(c2)
                \Edge[](a1)(d1)
                \Edge[](a1)(d2)
                \Edge[](a2)(e1)
                \Edge[](a2)(e2)
                \Edge[](a2)(f1)
                \Edge[](a2)(f2)

                \Edge[](b1)(g1)
                \Edge[](b1)(g2)
                \Edge[](b1)(h1)
                \Edge[](b1)(h2)
                \Edge[](b2)(i1)
                \Edge[](b2)(i2)
                \Edge[](b2)(j1)
                \Edge[](b2)(j2)

                \tikzset{EdgeStyle/.style={dashed,red}}
                \Edge[](b1)(b2)
                \Edge[](a1)(a2)
                \Edge[](c1)(c2)
                \Edge[](d1)(d2)
                \Edge[](e1)(e2)
                \Edge[](f1)(f2)
                \Edge[](g1)(g2)
                \Edge[](h1)(h2)
                \Edge[](i1)(i2)
                \Edge[](j1)(j2)

                \end{tikzpicture}

        }
        
        \medskip
        
          \medskip
          
             \medskip
        
        \subfloat[]{        \captionsetup[subfigure]{labelformat=empty}
        \centering
                \begin{tikzpicture}[baseline={([yshift=-.5ex]current bounding box.center)},scale=0.5, every node/.style={transform shape}]
                \SetVertexNormal[Shape      = circle,
                FillColor = white,
                LineWidth  = 1pt]
                \SetUpEdge[lw         = 1pt,
                color      = black,
                labelcolor = white]
                \tikzset{node distance = 1.6in}
                \tikzset{VertexStyle/.append  style={fill}}

                \Vertex[x=0,y=0]{M}
                \Vertex[x=-8,y=-3]{a1}
                \Vertex[x=-5,y=-3]{a2}
                \Vertex[x=8,y=-3]{b2}
                \Vertex[x=5,y=-3]{b1}
                \Vertex[x=-15,y=-6]{c1}
                \Vertex[x=-13,y=-6]{c2}
                \Vertex[x=-11,y=-6]{d1}
                \Vertex[x=-9,y=-6]{d2}
                \Vertex[x=-7,y=-6]{e1}
                \Vertex[x=-5,y=-6]{e2}
                \Vertex[x=-3,y=-6]{f1}
                \Vertex[x=-1,y=-6]{f2}

                \Vertex[x=15,y=-6]{j1}
                \Vertex[x=13,y=-6]{j2}
                \Vertex[x=11,y=-6]{i1}
                \Vertex[x=9,y=-6]{i2}
                \Vertex[x=7,y=-6]{h1}
                \Vertex[x=5,y=-6]{h2}
                \Vertex[x=3,y=-6]{g1}
                \Vertex[x=1,y=-6]{g2}

                \SetVertexNormal[Shape      = circle,
                FillColor = red,
                LineWidth  = 1pt]
                \Vertex[x=0,y=0]{M}
                \Vertex[x=-5,y=-3]{a2}
                \Vertex[x=8,y=-3]{b2}

                \tikzset{EdgeStyle/.style={-}}
                \Edge[](M)(a2)
                \Edge[](M)(b1)
                \Edge[](M)(b2)
                \Edge[](a1)(c1)
                \Edge[](a1)(c2)
                \Edge[](a1)(d2)
                \Edge[](a2)(f1)
                \Edge[](a2)(f2)

                \Edge[](b1)(g1)
                \Edge[](b1)(g2)
                \Edge[](b1)(h1)
                \Edge[](b1)(h2)
                \Edge[](b2)(i1)
                \Edge[](b2)(i2)

                \tikzset{EdgeStyle/.style={dashed,red}}
                \Edge[](b1)(b2)
                \Edge[](a1)(a2)
                \Edge[](c1)(c2)
                \Edge[](d1)(d2)
                \Edge[](e1)(e2)
                \Edge[](f1)(f2)
                \Edge[](g1)(g2)
                \Edge[](h1)(h2)
                \Edge[](i1)(i2)
                \Edge[](j1)(j2)

                \end{tikzpicture}

   }

                \caption{(A) Structure of potential collaborations within an organisation and required subtasks. (B) Successful task completion after collaboration success has been determined. Collaborations that fail have been removed. Red-shaded vertices are those tasks that cannot be completed because they do not have access to a key subtask.}
     \label{fig:m-n-2}
\end{figure}
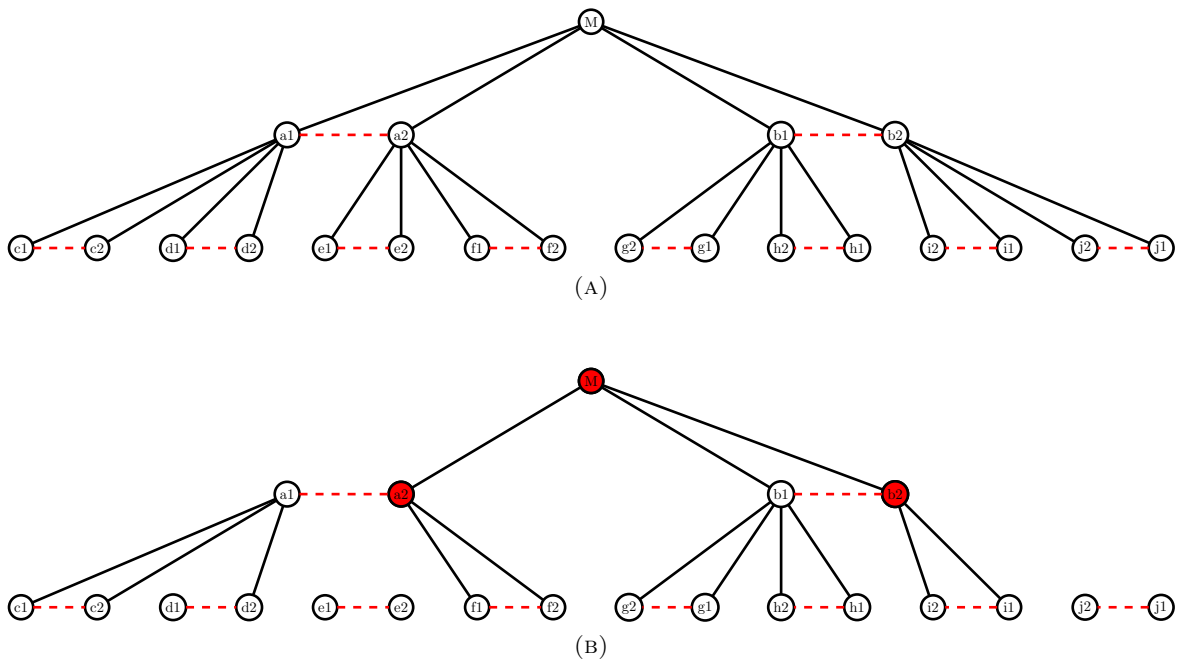

\subsubsection{Distribution of operational collaborations} We now introduce randomness into the realizations of which collaborations are operational. Let $\pi \in (0,1)$ denote the probability that a given collaboration is operational, and let all these realizations be independent. We call $\pi$ the strength or level of corporate culture. In the context of Figure \ref{fig:m-n-2}, this is the probability that a link present in Panel (A) is kept in Panel (B). We take this to be a one-dimensional summary of the strength of the corporate culture. In the context of a collaboration between a worker performing a task and another worker providing a required subtask, the idea is that a failure of operation represents a failure of an informal contract, preventing the subtask from being incorporated successfully into the task seeking to use it.
 The simplest interpretation is that this outcome is realized when the worker handling task $v$ seeks help from one handling $v'$, but it may also be that $v'$ is a task done at some point in the past (e.g., uploading records to a database) and the uncertainty is over whether this was done in a way useful for task $v$.
 A stronger corporate culture can help avoid failures by reducing misunderstandings, strengthening norms of cooperation, motivating workers to follow best practices such as documenting their work, etc. 

\subsection{Sensitivity of an organization to corporate culture}

Suppose first that the strength of the corporate culture is exogenous. We will study how the probability of successful completion of task $M$ depends on it. We call this probability $r$: formally, the probability that $s^*_M({\phi})=1$. Assume for the moment that the number of layers in the task tree network is infinite ($L=\infty$).

We now claim that $r$ satisfies the following equation:
\begin{equation}
r=(\;1\;-\;\underbrace{(\;1\;-\;\pi\; r\;)^n}_{\mathclap{\text{Probability a given subtask cannot be completed}}}\;)^m.\label{eq:r_simple}
\end{equation}
This equation is relatively straightforward to derive. Consider the first subtask---say of type $t$. The probability of a subtask of this type, say $t_1$,  being provided successfully is $\pi r$. This is the probability that the  subtask is completed successfully, which by symmetry is also $r$, multiplied by the probability that the collaboration to use it is operational, which is $\pi$. Consequently, the probability that this does not happen is $(1-\pi r)$ and the probability that no subtask of type $t$ can be provided  is $(1-\pi r)^n$. Hence, the probability of being able to source some subtask of this type is $1-(1-\pi r)^n$. The probability of being able to source \emph{all} $m$ types of subtasks is $(1-(1-\pi r)^n)^m$. This is therefore the probability of the event that the complex task $M$ can be successfully completed.

Eq. (\ref{eq:r_simple}) can have multiple fixed points and $r=0$ is always among them. However, it is the maximal fixed point that corresponds to $s^*(\phi)$, the set of tasks that can technologically be completed.\footnote{This can also be obtained by taking the large-$L$ limit of finite-layer models; see \citet[II.C]{elliott2022supply}.}   In Figure \ref{fig:phase_transition}(A), we plot $\rho(\pi)$. The key fact about this plot is that the probability of successful completion of a complex task is \emph{discontinuous} in the strength of the corporate culture $\pi$. The probability of successful completion is $0$ below a threshold corporate culture strength $\pi_{\crit}$, but then increases discontinuously to more than 70\% in the example examined here.

We formalize this property in the following lemma.

\begin{lemma}\label{lem:rho_disc}
Let $m\geq 2$ and $n\geq 1$. Then, there is a value $\pi_{\crit}\in(0,1]$ such that: (i) $\rho(\pi)=0$ for all $\pi<\pi_{\crit}$, (ii) $\rho(\pi)$ has a unique point of discontinuity at $\pi_{\crit}$, (iii) $\rho(\pi)$ is strictly increasing for all $\pi \geq \pi_{\crit}$ and (iv) $\lim_{\pi \downarrow \pi_{\crit}} \frac{\partial \rho(\pi)}{\partial \pi}=\infty $,  $\lim_{\pi \uparrow 1} \frac{\partial \rho(\pi)}{\partial \pi}=0 $, and $\frac{\partial \rho(\pi)}{\partial \pi}$ is otherwise finite.
\end{lemma}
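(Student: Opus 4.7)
The key idea is to substitute $x=\pi r$ and study the function $g(x):=(1-(1-x)^n)^m$ on $[0,1]$, so the fixed-point equation $r=(1-(1-\pi r)^n)^m$ becomes $g(x)=x/\pi$, and $\rho(\pi)$ corresponds to the largest positive solution (with $\rho = x/\pi$). The plan is to characterize geometrically the intersections of the line $y=x/\pi$ with the graph of $g$ and then translate all four assertions into statements about that picture.

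First I would establish that $g$ is S-shaped on $[0,1]$: $g(0)=0$, $g(1)=1$, $g$ is strictly increasing, $g'(0)=0$ (since $m\geq 2$), and there is a unique inflection point $x_0\in(0,1)$ with $g''>0$ on $(0,x_0)$ and $g''<0$ on $(x_0,1)$. A direct computation gives
\[
g''(x)=mn(1-x)^{n-2}\bigl(1-(1-x)^n\bigr)^{m-2}\bigl[(nm-1)(1-x)^n-(n-1)\bigr],
\]
whose sign in $(0,1)$ is controlled by the bracketed factor, which is monotone in $x$ and changes sign exactly once. From this, the chord-slope $g(x)/x$ attains a unique interior maximum at some $x^\star\in(0,1)$, at which $g'(x^\star)=g(x^\star)/x^\star$ (tangency of the graph with a ray from the origin).

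Define $\pi_{\crit}:=1/g'(x^\star)=x^\star/g(x^\star)\in(0,1]$. For $\pi<\pi_{\crit}$ the line $y=x/\pi$ lies strictly above $y=g(x)$ on $(0,1]$, so the only nonnegative fixed point is $0$, proving (i). At $\pi=\pi_{\crit}$ the line is tangent to $g$ at $x^\star$, giving the unique positive fixed point $\rho(\pi_{\crit})=g(x^\star)>0$; for $\pi>\pi_{\crit}$ the line has smaller slope and meets $g$ at exactly two positive points (by the S-shape), and $\rho(\pi)$ is the larger. Hence (ii): $\rho$ jumps from $0$ to $g(x^\star)>0$ at $\pi_{\crit}$ and is continuous on $(\pi_{\crit},1]$ by the implicit function theorem applied to $H(r,\pi):=g(\pi r)-r$, since at the maximal crossing $\partial_r H=\pi g'(\pi\rho)-1<0$ (the line cuts $g$ from above).

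Differentiation then yields
\[
\rho'(\pi)=\frac{\rho\,g'(\pi\rho)}{1-\pi g'(\pi\rho)},
\]
which is strictly positive and finite on $(\pi_{\crit},1)$, establishing (iii) and the ``otherwise finite'' part of (iv). As $\pi\downarrow\pi_{\crit}$ the denominator vanishes by tangency while the numerator stays bounded away from zero, so $\rho'(\pi)\to\infty$; at $\pi=1$, $\rho(1)=1$ and $g'(1)=mn\cdot 0^{n-1}=0$ for $n\geq 2$, giving $\rho'(\pi)\to 0$. The case $n=1$ is degenerate---$g(x)=x^m$ is convex, $\pi_{\crit}=1$, and (iii)--(iv) hold vacuously. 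The main obstacle is the S-shape claim; once the sign pattern of $g''$ is pinned down as above, every remaining step reduces to elementary convex geometry and the implicit function theorem.
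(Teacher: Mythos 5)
Your proof is correct, and it takes a genuinely different route from the paper's. The paper inverts the fixed-point relation to obtain $\pi = \Pi(r) = \bigl(1-(1-r^{1/m})^{1/n}\bigr)/r$ and then outsources almost everything to Lemma SA1 of \citet{elliott2022supply}, which establishes that $\Pi$ has a unique interior minimum at $(r_{\crit},\pi_{\crit})$ and is the inverse of $\rho$ on $[\pi_{\crit},1]$; the only claim argued in the paper itself is $\lim_{\pi\downarrow\pi_{\crit}}\rho'(\pi)=\infty$, obtained from the inverse function theorem since $\Pi'(r_{\crit})=0$ at the minimum. You instead work with the forward map: substituting $x=\pi r$ and analyzing intersections of the S-shaped curve $g(x)=(1-(1-x)^n)^m$ with rays through the origin. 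Your sign computation for $g''$ is correct (the bracket $(mn-1)(1-x)^n-(n-1)$ is monotone and changes sign once for $n\geq 2$, $m\geq 2$), the chord-slope function $g(x)/x$ then has a unique interior maximizer $x^\star$ because $\frac{d}{dx}\bigl[g(x)/x\bigr]$ has the sign of $xg'(x)-g(x)$, whose derivative is $xg''(x)$, and your identification $\pi_{\crit}=x^\star/g(x^\star)$ together with the strict inequality $\pi g'(\pi\rho)<1$ at the maximal crossing delivers all four parts, including the explicit formula $\rho'(\pi)=\rho\,g'(\pi\rho)/(1-\pi g'(\pi\rho))$ from which both limits in (iv) are immediate. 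Your treatment of the degenerate case $n=1$ (where $\pi_{\crit}=1$, consistent with the lemma's allowance $\pi_{\crit}\in(0,1]$) is also right. What your approach buys is self-containment---no appeal to the external lemma---plus a closed-form expression for $\rho'$; what the paper's approach buys is brevity, since the tangency/critical-point phenomenon underlying both arguments (your maximal chord slope is exactly the reciprocal of the minimum of $\Pi$) is already proved elsewhere. The only step I would ask you to make explicit is right-continuity of $\rho$ at $\pi_{\crit}$ (i.e., $x_+(\pi)\to x^\star$ as $\pi\downarrow\pi_{\crit}$), which follows from the strict monotonicity of $g(x)/x$ on $(x^\star,1)$ and is needed for the ``unique point of discontinuity'' claim in (ii).
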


\begin{figure}
\subfloat[]{
\includegraphics[width=0.4\textwidth]{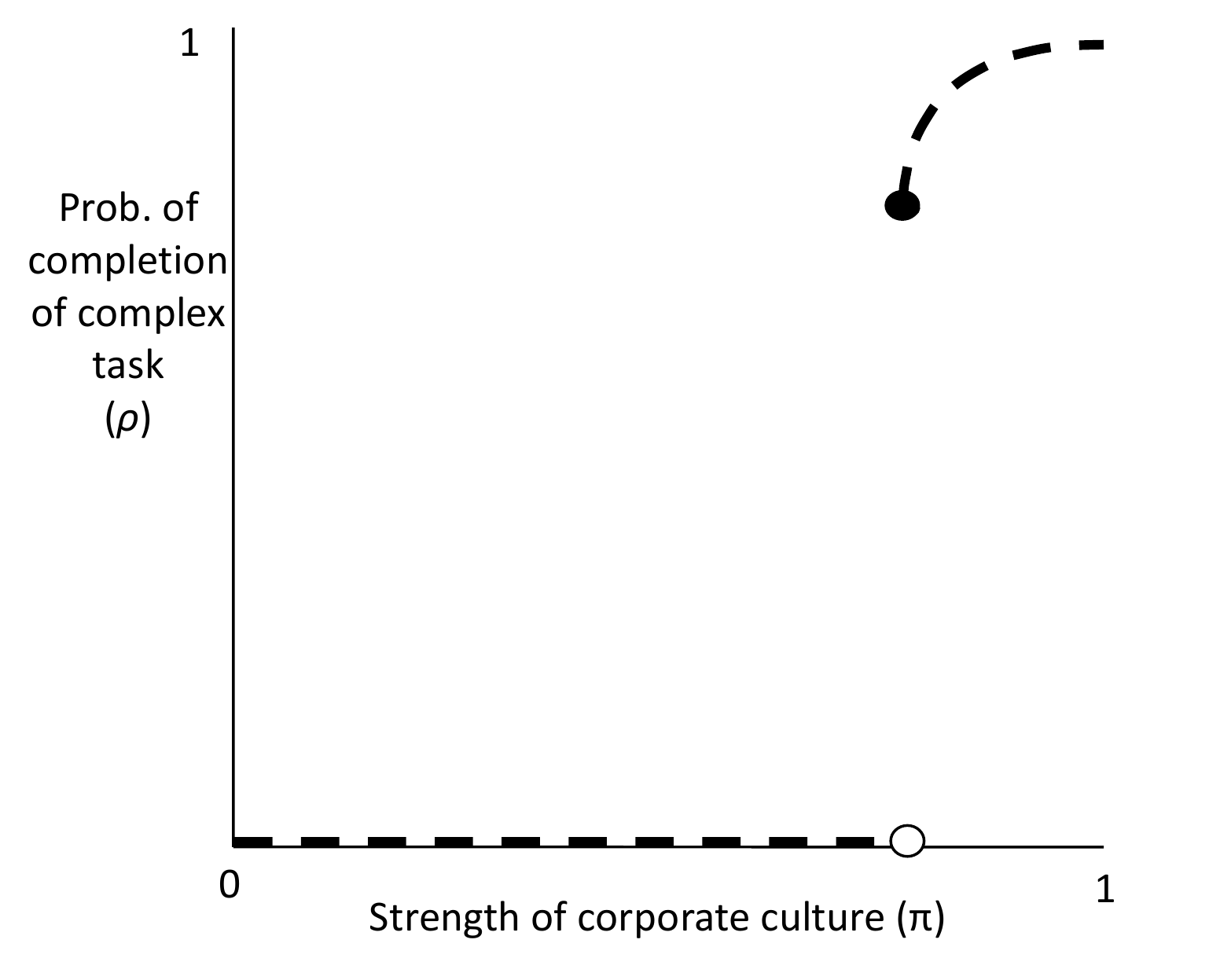}
}
\subfloat[]{
\includegraphics[width=0.4\textwidth]{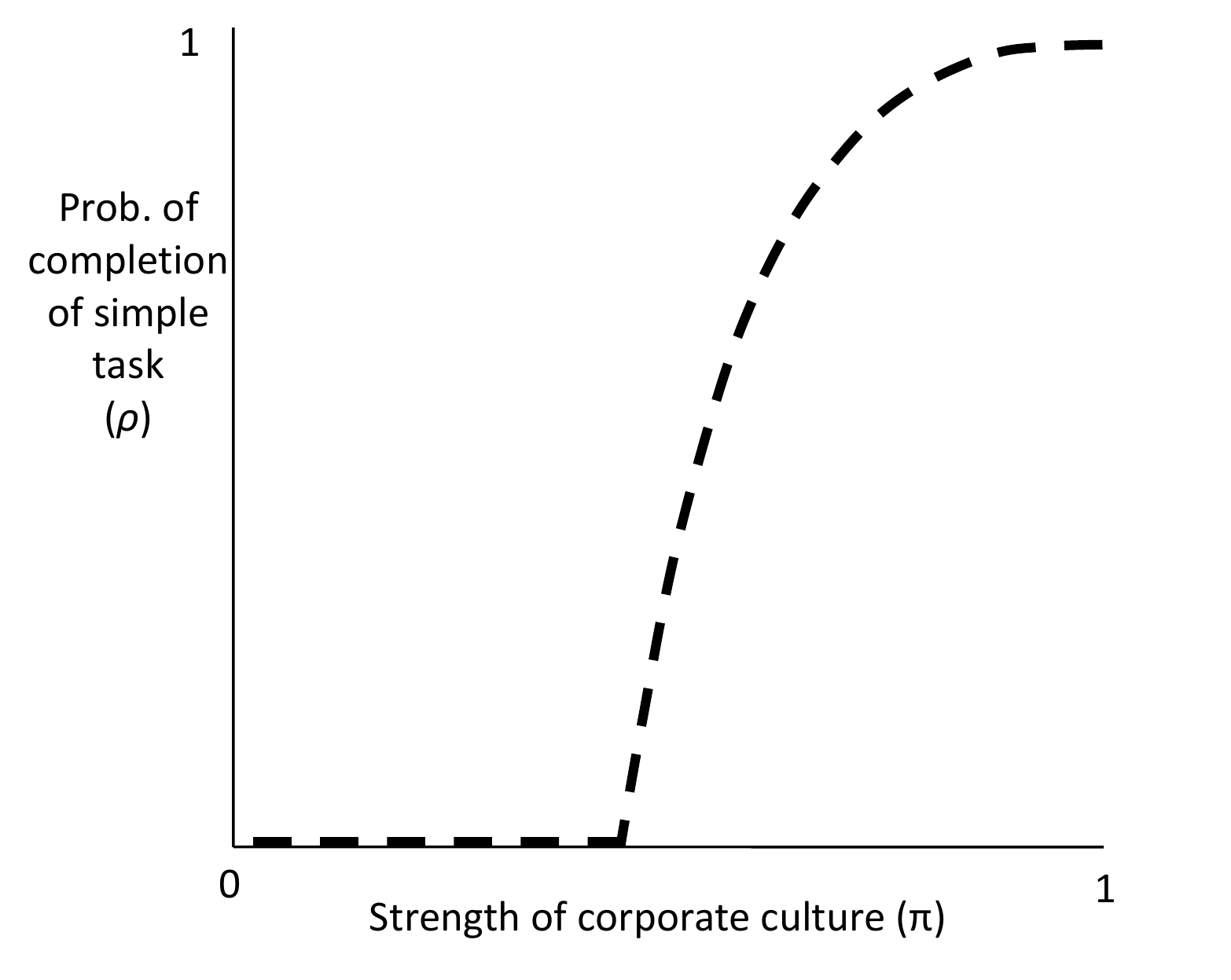}
}
\caption{(A) Probability of successful completion of a \emph{complex} task when all task providers require two subtasks ($m=2$) and have two potential providers of each subtask ($n=2$). (B) Probability of successful completion of a \emph{simple} task when all task providers require a single subtask ($m=1$) and have two potential providers of this subtask ($n=2$).}
\label{fig:phase_transition}
\end{figure}

This relationship already has some stark implications. As corporate culture improves, the completion of any task first becomes possible at some threshold level of corporate culture strength. 
Moreover, the \emph{reliability} of the organization---defined as the probability that a task is completed---can instantaneously increase from $0$ to a large positive number as corporate culture crosses the threshold $\pi_{\crit}$. This implies that small reductions in corporate culture can have a large negative impact on production and profits for complex organizations. 

We call a task \emph{complex} if  $m\geq 2$, so that at least two types of input are required at each stage of production (and $L$ is large). Complexity is key to the discontinuity. If instead the task to be completed is \emph{simple}, requiring only one type of subtask at each stage ($m=1$), then the probability of successful task completion $\rho $ is continuous in $\pi$.  Figure \ref{fig:phase_transition}(B) illustrates this.

An analogous analysis can be carried out for finite depths, $L<\infty$, and we denote by $\rho_L(\pi)$ the probability of completion for the root task $M$. The functions $\rho_L$ converge uniformly to the function $\rho$ studied here \citep*[Lemma SA2]{elliott2022supply}; the Online Appendix contains further discussion.

\section{Endogenizing corporate culture}

So far, corporate culture  $\pi$ has been modeled as exogenous. In this section, we model the endogenous determination of $\pi$ based on workers'  investments in corporate culture.  

\subsection{A model of endogenous investment}

\subsubsection{Worker effort and the determination of culture}

There are $k$ workers. Worker $i$ can exert effort $x_i \in [0,1]$ toward maintaining corporate culture. The cost of investment $x_i$ is $c(x_i)$ and is borne by worker $i$. Let $x=(x_1,x_2,...,x_{k})$ denote the profile of investment choices of the $k$ workers.
The corporate culture strength is  $\pi_k(x,\underline \pi)$, a function increasing in investments $x_i$ and in the baseline level of corporate culture $\underline \pi$, such that $\pi_k(\underline x,\underline \pi)=\underline \pi$ where $\underline x_i=0$ for all $i$. If different workers have different roles in the organization, then their impact on corporate culture may also be different (e.g., managers may have a greater weight). Thus, let corporate culture strength be determined by the average investment and the baseline level  $\underline \pi$ via the following formula:
\begin{equation}
\label{eq:pi_LMainPaper}
\pi_k(x,\underline \pi)=\underline \pi + \sum_i w_{k,i} h(x_i)
\end{equation}
where $w_{k,i}>0$ is the weight\footnote{For example, if $e$ is the total number of nodes in a tree network and $e_{k,i}$ is the number of such nodes occupied by worker $i$, then we could let $w_{k,i}=e_{k,i}/e$.} of agent $i$ in this average and $\sum_i w_{k,i}=1$. Here $h: [0,1] \to [0,\overline h]$, with $\overline h = 1-\underline \pi$, is an increasing and weakly concave function with $h(0)=0$ and $h(1)=\overline h$, %
modeling potentially decreasing returns to investments.\footnote{In the Online Appendix, we show that our results are robust to more general functional forms for $\pi_k(x,\underline{\pi})$.}

Our main results focus on the large-organization limit, where the number of workers $k$ grows large. We will assume that, in this limit, the marginal impact each worker has on corporate culture becomes negligible:
\begin{assumption}
\label{ass:lim_w_Li}
 $\max_i w_{k,i} \to 0 $ as $k \to \infty$.
\end{assumption}

We assume $\underline \pi < \pi_{\crit}$, so that when individuals make no effort in maintaining the corporate culture, the probability that each given collaboration is functional is too low to allow a complex task to be successfully completed, per Lemma \ref{lem:rho_disc}. We also make the following assumption:
\begin{assumption}
$c(x_i)$ is smooth and convex, satisfying $c(0)=0$; $c'(x_i)>0$ for any $x_i>0$;  $c'(0)=0$; and $\lim_{x_i \uparrow 1 }c'(x_i)=\infty$.
\label{ass:payoffMainPaper}
\end{assumption}

\subsubsection{The investment game} The timing of the game is as follows. At time $0$, workers simultaneously choose their investments $x$. At time $1$, each link in the collaboration network is operational with probability $ \pi_k(x,\underline \pi)$. The network realization determines whether the root task $M$ can be completed successfully. %

Workers' payoffs depend on the reliability, $\rho$, of the network---the probability of completing the root task $M$. Let each worker $i$ enjoy a (potentially different) benefit $a_i >\underline a > 0$ once the complex task has been completed. Thus, worker $i$'s expected payoff can be expressed as
\begin{equation}
\label{eq:ExpPayoffMainPaper}
U_i(x_i,x_{-i}) = \underbrace{ a_i\rho(\pi_k(x,\underline{\pi}))}_{ \text{$a_i \times $ Probability of successful task completion}} \hspace{.3in} - \hspace{.3in} \underbrace{c(x_i)}_{\mathclap{\text{Cost of investment}}}.
\end{equation}

While individuals have different positions within the organization, their interests are somewhat aligned: all workers derive some incremental benefits from the successful completion of the complex task (e.g., through bonuses, stock options, future career success, etc.). The dependence of all payoffs on a single root task is assumed just for simplicity. More realistically, different workers might be rewarded for different tasks. As long as each is rewarded for the completion of some high-$L$ task or tasks, our analysis extends readily.

\subsection{Equilibrium analysis}  As $k$ grows large, the marginal impact of each worker's effort on corporate culture becomes negligible. The individual marginal costs of contributing, on the other hand, do not vanish. This raises the question of whether workers would ever voluntarily contribute to raise the strength of corporate culture beyond the exogenous baseline $\underline \pi$, since the free-riding problem is so severe.

Figure \ref{fig:phase_transition}(A) shows how in spite of this, equilibrium contributions to corporate culture can be considerable.  We illustrate the key logic in the idealized case of $L=\infty$, where tasks branch indefinitely; as we discuss in the Online Appendix, one can state an analogue of our results for a large, finite $L$. With no investments, the probability of successful task completion would be $0$. Now suppose each worker $i$ were to make an investment $x_i$ such that $h(x_i)=\pi_{\crit} - \underline \pi$. Then, corporate culture would be at the point of discontinuity, $\pi_{\crit} $, where $\rho(\pi)$ has infinite derivative by Lemma 1(iv).  At such a point, for a large, finite number of workers, the marginal benefits of investment are higher than the marginal costs of investment. Therefore, there is a corporate culture level $\pi$ slightly above $\pi_{\crit}$ at which marginal benefits equal marginal costs for all workers. This yields an equilibrium with positive, voluntary contributions.

At such an equilibrium,  marginal benefits from investment must equal marginal costs:
$$a_i\frac{\partial \rho}{\partial \pi_k}\frac{\partial \pi_k}{\partial x_i}=a_i\frac{\partial \rho}{\partial \pi_k}w_{k,i}h'(x_i)=c'(x_i), \ \forall i.$$
For this to hold, we need $\partial \rho/\partial \pi_k$ evaluated at the equilibrium value of $\pi$ to be large enough  as $k$ grows (recalling that $\max_i w_{k,i} \to 0 $ as $k \to \infty$ %
and thus the marginal impact of each worker's investment on $\pi$ also diminishes to $0$). Thus, the equilibrium will approach the point of discontinuity (where $\partial \rho/\partial \pi_k$ grows arbitrarily large) from above, and hence be on the edge of the precipice. 

Now, consider an exogenous\footnote{In Section \ref{OA-sec:GenModelForeseeableShocks} of the Online Appendix, we further show that our results are also robust to \emph{anticipated} shocks to the baseline corporate culture strength $\underline{\pi}$, i.e. when workers take into account the possibility that such exogenous shocks may occur.} shock that reduces, even slightly, the baseline level of corporate culture strength $\underline \pi$ after the agents' choice of investment (the interpretation is that we are considering a timescale on which agents do not have time to adjust their investments to compensate for the change). This shock could be, for instance, the announcement of a merger or change in management, or any other change in the work environment that reduces collaboration effectiveness. Because the equilibrium configuration was on the edge of the precipice, such a shock will result in a collapse, or at least a severe reduction, in the organization's probability of completing complex tasks. 

To formalize this discussion, we now define a strong notion of fragility, which requires the total collapse of the organization's ability to complete the complex task following a small exogenous shock to corporate culture strength.

\begin{definition}[Fragility]
\label{def:fullFragMainPaper}
An equilibrium profile $x^*$ is $\epsilon$-\emph{fragile} if  $\rho(\pi_k(x^*,\underline \pi-s))=0$, for any $s>\epsilon>0$. That is, the probability of completing the complex project falls to $0$ following a negative shock of size greater than $\epsilon$ to the baseline corporate culture strength $\underline \pi$.
\end{definition}

We can now state our main result.

\begin{proposition}\label{prop:main4}
$\;$
\begin{enumerate} 
\item[(a)] There is a $\underline k$  such that for $k>\underline k$: (i) There exists a no-contribution equilibrium with $x^*_i=0$ for all $i$. (ii) If $\underline{a}$ is sufficiently large\footnote{The required lower bound on $\underline{a}$ does not depend on $k$.} and $w_{k,i}/w_{k,j}$ is uniformly bounded, there is a positive-investment equilibrium---one with $x^*_i>0$ for all $i$. 
\item[(b)] For any $\epsilon>0$, there exists $\underline k(\epsilon)$ such that for $k>\underline k(\epsilon)$, any positive-investment equilibrium $x^*$ satisfies $\pi_k(x^*,\underline \pi) \leq \pi_{\crit} +\epsilon$ and is $\epsilon$-fragile. 
\end{enumerate}
\end{proposition}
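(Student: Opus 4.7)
My plan is to attack the three assertions in sequence, using Lemma \ref{lem:rho_disc}---specifically the vertical tangent of $\rho$ at $\pi_{\crit}$ and the fact that $\rho\equiv 0$ below $\pi_{\crit}$---as the main technical engine. For \textbf{(a)(i)}, Assumption \ref{ass:lim_w_Li} gives that for $k$ large, $\max_i w_{k,i}\overline h<\pi_{\crit}-\underline\pi$; if everyone else plays $0$, any deviation by worker $i$ raises the culture at most to $\underline\pi+w_{k,i}\overline h<\pi_{\crit}$, so $\rho\equiv 0$ over worker $i$'s entire choice set and the payoff collapses to $-c(x_i)$, uniquely maximized at $x_i=0$.

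For \textbf{(a)(ii) construction}, I would build a positive-investment equilibrium via a fixed-point argument on the aggregate culture level. For each $\pi\in(\pi_{\crit},1)$, the interior first-order condition $c'(x_i)/h'(x_i)=a_i w_{k,i}\rho'(\pi)$ admits a unique solution $x_i(\pi)\in(0,1)$: under Assumption \ref{ass:payoffMainPaper} and concavity of $h$, the left-hand side is continuous, strictly increasing, zero at $x_i=0$, and diverges as $x_i\uparrow 1$. I then define $\Phi(\pi):=\underline\pi+\sum_i w_{k,i}h(x_i(\pi))$ on $(\pi_{\crit},1]$. Lemma \ref{lem:rho_disc}(iv) forces $x_i(\pi)\to 1$ (so $\Phi(\pi)\to 1>\pi_{\crit}$) as $\pi\downarrow\pi_{\crit}$, and $x_i(1)=0$ (so $\Phi(1)=\underline\pi<1$); the implicit function theorem gives continuity of $\Phi$, and the intermediate value theorem produces a fixed point $\pi^\star\in(\pi_{\crit},1)$, yielding a candidate profile $x^\star:=(x_i(\pi^\star))_i$ with $x_i^\star>0$ for all $i$.

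The \textbf{main obstacle} will be upgrading this critical profile to a Nash equilibrium. Worker $i$'s payoff is concave on each of the regions $\{x_i:\pi_k<\pi_{\crit}\}$ and $\{x_i:\pi_k>\pi_{\crit}\}$---with concavity above $\pi_{\crit}$ either argued directly from the fixed-point equation for $\rho$ or, for large $k$, ensured because the $w_{k,i}^2\rho''$ term in the second derivative is dominated by $-c''$ and $w_{k,i}\rho' h''$---and jumps upward at the boundary, so the only best-response candidates are $x_i=0$ and $x_i^\star$. If $\pi^\star-w_{k,i}h(x_i^\star)\geq\pi_{\crit}$ (non-pivotal deviation), mean-value inequalities from convex $c$ and concave $h$ (both vanishing at $0$), combined with the FOC, show the interior point weakly dominates $x_i=0$. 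In the pivotal case $\pi^\star-w_{k,i}h(x_i^\star)<\pi_{\crit}$, deviating to $0$ collapses $\rho$ to $0$ and the required inequality becomes $a_i\rho(\pi^\star)>c(x_i^\star)$; the uniform bound on $w_{k,i}/w_{k,j}$ keeps $x_i^\star$ uniformly bounded away from $1$, so $c(x_i^\star)\leq C$ for some constant, and choosing $\underline a>C/\rho(\pi_{\crit}^+)$ (positive by Lemma \ref{lem:rho_disc}) closes the gap.

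For \textbf{(b)}, fix $\epsilon>0$. By Lemma \ref{lem:rho_disc}(iv), $\rho'\leq M(\epsilon)<\infty$ on $[\pi_{\crit}+\epsilon,1]$. If a positive-investment equilibrium had $\pi_k(x^\star,\underline\pi)\geq\pi_{\crit}+\epsilon$, the interior FOC would give $c'(x_i^\star)\leq a_i w_{k,i}h'(0)M(\epsilon)$ for every $i$, so $\max_i x_i^\star\to 0$ as $k\to\infty$ (using $\max_i w_{k,i}\to 0$ and continuity plus positivity of $c'$ on $(0,1]$); hence $\pi_k(x^\star,\underline\pi)=\underline\pi+\sum_i w_{k,i}h(x_i^\star)\to\underline\pi<\pi_{\crit}$, contradicting $\pi_k\geq\pi_{\crit}+\epsilon$. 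So for $k\geq\underline k(\epsilon)$, $\pi_k(x^\star,\underline\pi)\leq\pi_{\crit}+\epsilon$, and linearity of $\pi_k$ in $\underline\pi$ then yields $\pi_k(x^\star,\underline\pi-s)=\pi_k(x^\star,\underline\pi)-s<\pi_{\crit}$ for any $s>\epsilon$, so Lemma \ref{lem:rho_disc}(i) delivers $\rho(\pi_k(x^\star,\underline\pi-s))=0$, i.e., $\epsilon$-fragility.
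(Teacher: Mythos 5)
Your proposal is correct and follows essentially the same route as the paper's proof: the same unilateral-deviation argument for the no-contribution equilibrium, the same fixed-point construction on the aggregate culture level (your $\Phi$ is the paper's $P$, built from the inverse of $c'/h'$) together with the same large-$\underline a$ / bounded-weight-ratio check against pivotal deviations below $\pi_{\crit}$, and for (b) the contrapositive of the paper's observation that the first-order condition forces $\rho'\to\infty$, hence $\pi_k\to\pi_{\crit}$. The only nit is your use of $h'(0)$ in part (b), which the stated assumptions do not guarantee is finite; replacing it with $h'(\delta)$ on the event $x_i^\star\geq\delta$ (as the paper does) repairs this immediately.
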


The key force supporting the positive-investment equilibrium is that the sensitivity of $\rho$ to the level of culture $\pi$ makes each worker's marginal effect on task completion sufficient to incentivize positive investment. That sensitivity also makes the outcome fragile to global shocks. In practice, the organization may have access to ``backup'' processes that produce a lower-value substitute in case it is incapable of supporting complex production (see Section \ref{sec:selecting_technology}). We also note that fragility coexists with a robustness to the idiosyncratic shocks that make many collaborations fail to operate. Indeed, the redundancy built into the network (i.e. the fact that each subtask can be potentially completed by up to $n$ workers) makes the system quite robust to local shocks.

\section{Discussion and Extensions}

\subsection{Connections and contrasts with other theories of corporate culture}\label{sec:other-theories}

Several economic theories of corporate culture have been presented in the literature (for example, \cite{camerer1988economic}, \cite{kreps1990corporate}, \cite{akerlof2005identity},  \cite{akerlof2020stories}, \cite{kets2021organizational} and \cite{gibbons2021situation}). We do not view these as alternatives to our approach, but rather as other facets of a complex concept. Broadly, these theories are consistent with how we view corporate culture: as supporting common understandings and norms that help address agency problems and other frictions. The novelty of our approach is in the interaction between the complex, nested nature of the task network  and the model of culture as a public good.

An important feature of our model is that the  culture that agents invest in is global, i.e., at the level of the organization. Of course, some investments are more local in nature, e.g., developing deeper understandings with specific collaborators.   \citet*{elliott2022supply} models  investments only in one's own relationships in a closely related model of production; the sensitivity of $\rho$ to global culture plays \emph{no} direct role in agents' investment incentives there. This important theoretical contrast means that, in the other model, positive equilibrium investments can put the outcome well above the precipice---something that cannot occur in the present model.  

The takeaway from this comparison is that if purely local investment incentives are strong enough, then collaborations can enable complex production without investment in broader corporate culture. The theory developed in the present paper is relevant in case this does not happen, so that voluntary investment in organization-wide culture matters. This is especially salient in large organizations where new collaborations arise frequently outside the context of  pre-existing close relationships---due to, e.g., the complex and shifting nature of tasks as well as turnover.\footnote{\cite{rahmandad2016capability} surveys evidence that such forces require constant replenishment of a stock of corporate culture.} Broad corporate culture then becomes critical for fostering productivity.

The endogenous sensitivity of outcomes to individual investments may have interesting interactions with other issues in the theory of organizations. For example,  theories of multilateral enforcement rely on threats of a broad breakdown of cooperation to deter the violation of norms or promises \citep{levin2002multilateral}. Such ``grim trigger'' coordinations by many workers to withdraw investment may not always be plausible. In contrast, if equilibrium performance is endogenously highly sensitive to individuals' or small groups' investments, as in our model, such a deterrent may become more credible.

\subsection{Selecting the complexity of the organization}\label{sec:selecting_technology}

We now provide a minimal elaboration of our benchmark complex organization model to illustrate some phenomena that arise when we endogenize the complexity of projects.

Suppose that the management of a firm can choose between a simple project and a complex project, before the game we have studied is played. For the simple project, each task, including the completion of the final project, requires a single type of subtask to be completed. For the complex project, each task requires two types of subtask. In both cases, there are two potential collaborations for each type of subtask, so that $n=2$.  We assume the complex project is considerably more valuable than the simple one, if completed, and again consider the $L=\infty$ case, with a large number $k$ of workers. Finally, we assume that the organization works on the project with the highest expected value; when there are multiple equilibria for a given project, the most productive equilibrium is selected. 

\begin{figure}
\subfloat[]{\includegraphics[width=0.5\textwidth]{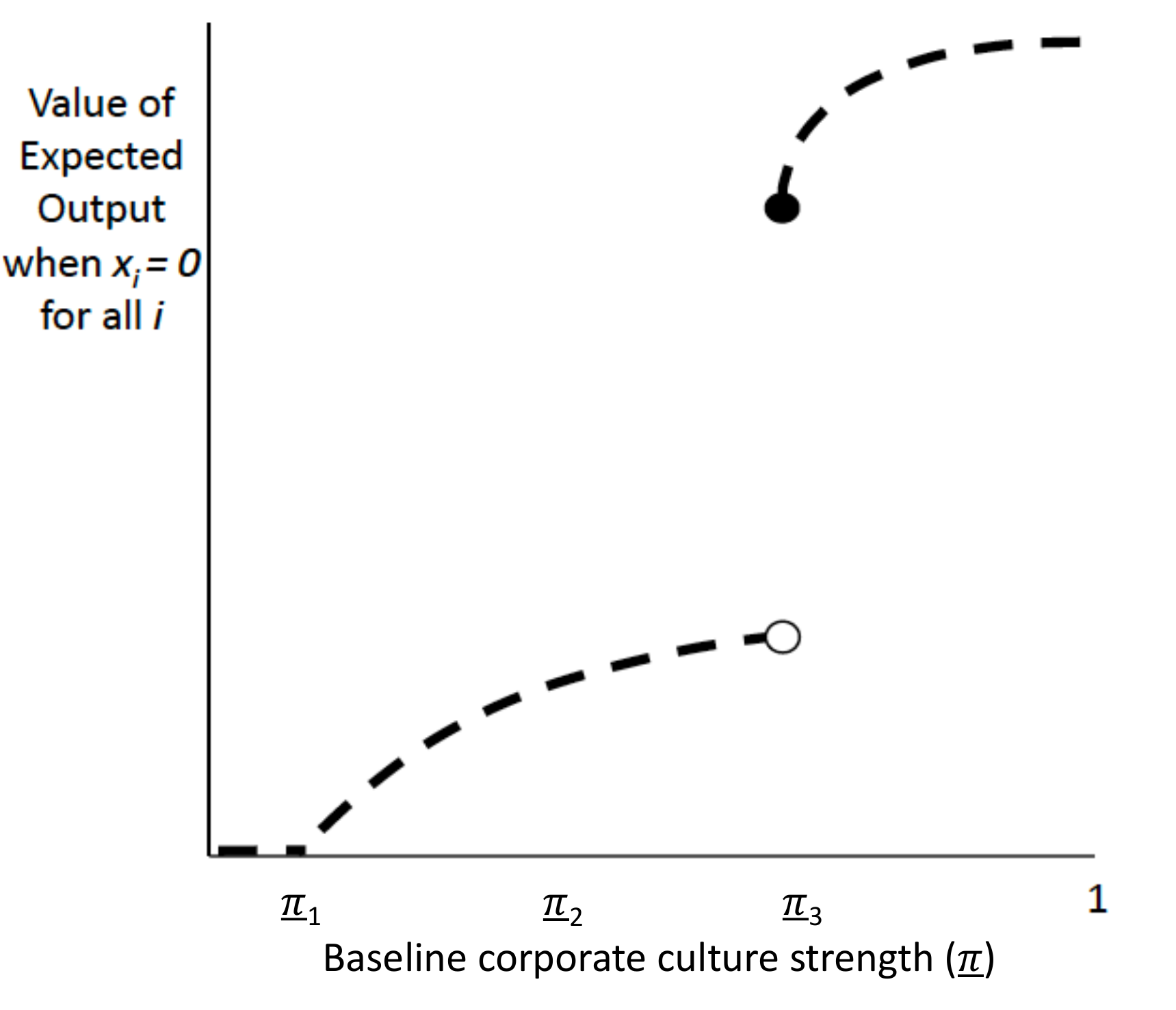}}
\subfloat[]{\includegraphics[width=0.5\textwidth]{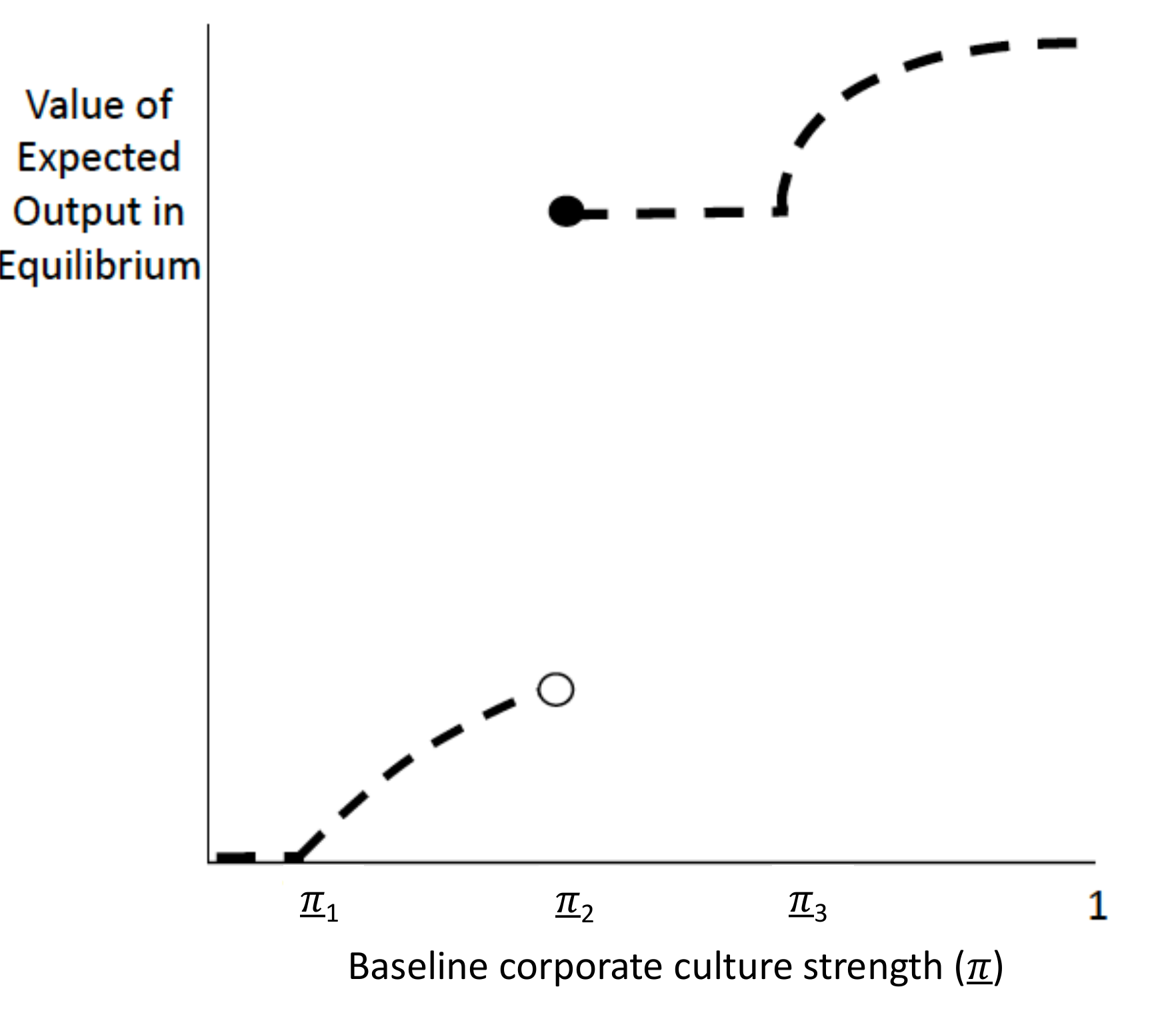}}
\caption{Choosing between a simple and complex project: Panel (A) shows the value of expected output conditional on there being no contributions to corporate culture ($x=0$) as baseline corporate culture strength varies. Panel (B) shows the value of expected equilibrium output as baseline corporate culture strength varies and contributions $x^*$ are those made in equilibrium.}\label{fig:upper_envelope_2}
\end{figure}

As shown in Figure \ref{fig:upper_envelope_2}(A), there is a key threshold for the strength of corporate culture below which the completion of any task, either simple or complex, fails for sure. We denote this threshold $\pi_1>0$. Further, as there is no discontinuity in the probability of successful completion when the simple task is performed, whenever the simple task is selected there will be no contributions to corporate culture strength, and its equilibrium level will be the baseline level $\underline \pi$.

Now we consider the existence of the positive contribution equilibrium when the task is complex. By Proposition \ref{prop:main4}, we know that a positive contribution equilibrium can exist for the complex task when completing this complex task is efficient. Thus, in such a case, there is some threshold\footnote{We assume that $\underline \pi_2>\underline \pi_1$, although this does depend on the parametrization of the problem.} $\underline \pi_2\geq 0$ such that shutdown of the complex task is efficient and there is no positive contribution equilibrium when $\underline \pi \leq  \underline \pi_2$. 

A third key threshold that has been discussed at length is the threshold on actual (rather than baseline) corporate culture strength, at which the probability of successful completion of the complex task becomes positive for the first time. We had so far denoted it by $\pi_{crit}$. For this section, denote this threshold by $\underline \pi_3 $ and note that $\underline \pi_3>\underline \pi_2$. This threshold will be important in equilibrium because for a baseline strength $\underline \pi>\underline \pi_3$, the discontinuity in the probability of successful task completion will be irrelevant for incentivizing individual contributions, and there will be a unique equilibrium in which no contributions are made.

We can now consider what happens in equilibrium as the baseline level of corporate culture strength $\underline \pi$ increases, under the assumptions that for a given task the most productive equilibrium is selected and that, given this, the most productive task is selected. This is shown in Figure \ref{fig:upper_envelope_2}(B). First, for sufficiently low $\underline \pi$, there is no output. This corresponds to the case in which the baseline corporate culture strength is sufficiently low that even the simple task cannot be successfully completed. Then, once the threshold $\underline \pi_1$ is passed, the simple task is used to generate positive output, but there will be no individual contributions towards corporate culture in equilibrium. Next, the threshold $\underline \pi_2$ is passed. At this point, the firm switches to the complex task and the equilibrium also switches to the positive contributions equilibrium. Endogeneously, corporate culture strength $\pi(x^*,\underline \pi)$ is now equal to $\underline \pi_3$ (which, as we said, is equal to the critical level $\pi_{crit}$), and around this point there is a discontinuous increase in output. Output, however, does remain fragile and will collapse following a small shock to corporate culture. As the baseline corporate culture strength keeps increasing, lower private investments will be made in equilibrium but the overall equilibrium corporate culture strength will remain at $\underline \pi_3$. In this range, improvements in baseline corporate culture strength perfectly displace private effort. Finally, once the threshold $\underline \pi_3$ is passed, there will be no equilibrium contributions to corporate culture, but the later will nevertheless become more robust to shocks---larger shocks will be required to cause a collapse in the firm's ability to complete tasks.

This simple extension of our model is consistent with there being a positive association between the ability of firms to complete complex tasks and the strength of corporate culture. It also suggests that, in equilibrium, the returns to improving the baseline corporate culture vary in a nonmonotonic and interesting way. The returns from passing one of the thresholds $\underline \pi_1$ or $\underline \pi_2$ can be very large, while other improvements have little or no impact. The impact on workers' incentives to contribute to the public good are also interesting. They have no incentive to do so below $\underline \pi_2$ or above $\underline \pi_3$, but between these two thresholds make strictly positive investments. 

\subsection{Leadership}

In this section, we model a choice of corporate culture at the organizational level and consider problems that may arise when corporations face changing environments.

To formalize the idea that the culture into which workers are asked to invest is a strategic choice of the organization, suppose each organization chooses a corporate culture $\theta_j \in \Theta$, where $\Theta$ is a finite set.\footnote{A lot of work in the management literature presents a typology of corporate cultures consistent with this modeling approach. For example, \cite{Groysberg2018}.} 
To capture the received wisdom that a corporate culture must be tailored to an organization (i.e., that one culture does not fit all), we let each firm $j$ have a mapping 
$$\underline \pi_j: \Theta \rightarrow [0,1]$$ 
that determines the baseline corporate culture strength as a function of its culture choice.\footnote{A line of work in the management literature presents a typology of corporate cultures consistent with this modeling approach. See, for example, \cite{Groysberg2018}.} 

Thus, in a new stage zero of our game, each firm chooses
$$\theta^*_j\in\argmax_{\theta_j\in\Theta} \underline \pi_j (\theta_j),$$ anticipating profits from play of the most productive equilibrium of the game studied in our main analysis.  (Note a firm's expected profits are always weakly increasing in $\underline \pi_j$.)
We now use this framework to consider some challenges firms might face. First, suppose two complex organizations $j$ and $j'$ merge. For simplicity, we suppose that operations in the firm remain independent, but we assume that the merged firm must then choose a single corporate culture to fit both organizations. In this case, unless there exists a $\theta^*=\theta_j^*=\theta_{j'}^*$ that is optimal for both organizations, at least one of the organizations will have to operate with a corporate culture that is a worse fit for it: Either $\underline \pi_j$, $\underline \pi_{j'}$ or both will decrease. There are then two possibilities. If workers' effort choices can adjust, there may be an equilibrium in which their effort increases to exactly offset the decrease in baseline corporate culture levels. In this case, there will be no change to the organization's ability to execute complex tasks. However, there may not be such an equilibrium. If $\underline \pi_j$ is sufficiently low, the unique equilibrium is for the workers to exert no effort, resulting in the organization's inability to complete complex tasks. Production may also fail  if the workers do not have time to adjust their effort levels. Thus the theory predicts that mergers pose a substantial risk to the ability of large organizations to execute complex tasks, and this risk is larger when the corporate cultures of the merging firms are initially less aligned. The incompatibility of corporate cultures is often blamed for failed mergers \citep{cartwright1993role,kotter2008corporate} and is of interest as a mechanism explaining which firms can successfully integrate \citep*{gorodnichenko2018cultural}.\footnote{As corporate culture is hard to measure, field evidence establishing its causal effect is difficult to obtain. \cite{weber2003cultural} conduct a laboratory experiment. Under an interpretation of corporate culture consistent with ours, they find that it is possible to establish norms and understandings within subject groups separately;  once groups are merged together, performance declines markedly.}

A second challenge we consider is a change of leadership in a large organization. An important role that leaders can play is in setting the corporate culture and coordinating the direction of organization-wide efforts
\citep{gibbons2012managers,bolton2013leadership}. We can think of the choice set of available cultures as specific to the leadership: i.e. $\Theta_j$  depends on the leadership of firm $j$. Thus, after a change in leadership, the initial $\theta^*_j$ might no longer be feasible. This can create opportunities as well as risks. The new leadership might be able to choose a better corporate culture for the organization, improving $\underline \pi_j$. If the organization initially has a $\underline \pi_j$ such that the zero-contribution equilibrium is being played, then any change in equilibrium selection this brings about will be weakly beneficial. However, if a positive contribution equilibrium is being played, then a change in corporate culture could result in zero contributions. Indeed, a change in leadership might mean that this is the only equilibrium. Again, such a change would undermine the ability of the corporation to undertake complex tasks, to its detriment.

Some observations are consistent with this. First, organizations with strong corporate cultures seem to often appoint leaders from within, thus helping to maintain the corporate culture. Second, when strong leaders depart unexpectedly, an organization's stock price often falls substantially \citep*{quigley2017shareholder}. This could reflect concerns that the new leadership cannot maintain a successful corporate culture.

\section{Proofs}

\begin{proof}[Proof of Lemma \ref{lem:rho_disc}]
From Eq. (\ref{eq:r_simple}), if $r=\rho(\pi)$ and $r>0$ then we can manipulate (\ref{eq:r_simple}) to yield
\begin{eqnarray*}
 \pi = \Pi(r)  := \frac{ 1 - (1-r^{1/m})^{1/n}}{r}.
\end{eqnarray*}  Lemma SA1 of \cite*{elliott2022supply}  establishes all but the first statement in (iv). It does this by showing that $\Pi$ has a unique minimum at a point $(r_{\crit},\pi_{\crit})$ and that $\Pi$ is the inverse of $\rho$ on the domain $\pi \in [\pi_{\crit},1]$. To check (iv), note that the previous sentence implies $\frac{d \rho(\pi)}{d \pi}|_{ \pi \downarrow \pi_{\crit}}= 1/\frac{d \Pi(r)}{dr}|_{ r \downarrow r_{\crit}}$. Since $\Pi$ is continuously differentiable and has a minimum at $r=r_{\crit}$, its derivative must be $0$ there. 
\end{proof}

\begin{proof}[Proof of Proposition \ref{prop:main4}]

We first prove part (b). Consider worker $i$'s expected utility in an infinite-length ($L=\infty$) tree:
$$U_i(x_i,x_{-i}) =  a_i\rho(\pi_k(x,\underline{\pi}))  -  c(x_i).$$

By the chain rule, worker $i$'s marginal utility from increasing his investment is

\begin{eqnarray}
\frac{dU_i(x)}{dx_{i}}&=&a_i\bigg(\frac{\partial\rho(\pi_k(x,\underline{\pi}))}{\partial\pi_k(x,\underline{\pi})}\bigg)\bigg(\frac{\partial\pi_k(x,\underline{\pi})}{\partial x_i}\bigg) -c'(x_i).
\label{eq:dUdxi_general-1}
\end{eqnarray}
 The first term is the marginal benefits to $i$ of increasing $x_i$ and the second term is $i$'s marginal cost.

Denoting $\frac{\partial\rho(\pi_k(x,\underline{\pi}))}{\partial\pi_k(x,\underline{\pi})}\Big|_{\pi_k(x,\underline{\pi})=\pi}$ by $\rho'(\pi)$, and noting that $\frac{\partial\pi_k(x,\underline{\pi})}{\partial x_i} = w_{k,i}h'(x_i)$, the first-order optimality condition ($\frac{dU_i(x)}{dx_{i}}=0$) can be stated as 
\begin{equation}
a_iw_{k,i}h'(x_i)\rho'(\pi) = c'(x_i),
\label{eq:FOC_proofEquilibria}
\end{equation}
where
\begin{equation}
\pi = \underline{\pi} + \sum_i w_{k,i}h(x_i).
\label{eq:Consistency_proofEquilibria}
\end{equation}
A positive investment equilibrium  $x^*$ must satisfy Eqs. (\ref{eq:FOC_proofEquilibria}) and (\ref{eq:Consistency_proofEquilibria}) for all $i$ such that $x_i>0$, and have $\pi_k(x,\underline{\pi})>\underline{\pi}$. 

Now consider a sequence (indexed by $k$) of positive contribution equilibria. There must be a $\delta>0$ and sequence $i(k)$ so that, for each $k$, $x_{i(k)}\geq \delta$ and and Eq. (\ref{eq:FOC_proofEquilibria}) holds with $i=i(k)$; otherwise $\lim_k \pi_k(x(k),\underline{\pi})=\underline{\pi}$ and nobody has an incentive to make positive contributions by Lemma 1. Along this sequence, the right-hand side of (\ref{eq:FOC_proofEquilibria})  is bounded away from $0$, while $w_{k,i(k)} \to 0 $ by Assumption \ref{ass:lim_w_Li} and $h'$ is uniformly bounded for $x_i\geq \delta$. Thus 
for (\ref{eq:FOC_proofEquilibria}) to hold it must be that $\rho'(\pi_k(x(k),\underline{\pi}))\to \infty$, which by Lemma 1 is possible only if $\lim_k \pi_k(x(k),\underline{\pi}) = \pi_{\text{crit}}$.

As a result, there is a $\underline{k}(\epsilon)$ so that if $k\geq \underline{k}(\epsilon)$ then $\pi_k(x(k),\underline{\pi})< \pi_{\crit}+\epsilon$. A shock $s>\epsilon$ to the baseline corporate culture strength $\underline \pi$ results in a corporate culture strength  $\pi_{\crit}-s<\pi_{\crit}$ and thus in $\rho(\pi_{\crit}-s)=0$ by Lemma \ref{lem:rho_disc}, establishing the claim about fragility. %

\medskip

Turning now to part (a), we now show that there always exists a zero contribution equilibrium when $k$ is large enough. We consider the candidate profile $x=\vec{0}$ and the deviations $(x_i,x_{-i})=(x_i,\vec{0})$ where $i$ takes action $x_i$ and everyone else takes action $0$. When $k$ is large enough, then for any $x_i \in (0,1]$, $$a_i\rho(\pi_k((x_i,\vec{0}),\underline \pi))=0 < c(x_i),$$ and hence the benefit of investing is smaller than the cost. Indeed, $c(x_i)>0$ for $x_i>0$, while $a_i\rho(\pi_k((x_i,\vec{0}),\underline \pi))=0$ for all $x_i \in [0,1]$ when $w_{k,i}$ is small enough, since then $\pi_k((1,\vec{0}),\underline \pi) =\underline \pi + w_{k,i}h(1)< \pi_{\crit}$. It follows that $x=\vec{0}$ is an equilibrium. 

Finally, we show the existence of positive-contribution equilibria. 
Define $g(x_i) = \frac{c'(x_i)}{h'(x_i)}$, which is increasing.\footnote{This follows from the fact that $g(x_i)=\frac{c'(x_i)}{h'(x_i)}$ is increasing, which in turn follows from $c'(x_i)$ being increasing while $h'(x_i)$ is weakly decreasing. } Note since $\lim_{x_i \to 0} g(x_i) = 0$ and $\lim_{x_i \to 1} g(x_i) = \infty$, then it follows that $\lim_{y \to 0} g^{-1}(y) = 0$ and $\lim_{y \to \infty} g^{-1}(y) = 1$.

For $\pi \geq \pi_{\crit}$, define $X_i(\pi) = g^{-1}\Big( a_iw_{k,i}\rho'(\pi) \Big) $. Now define\footnote{The motivation is that using (\ref{eq:Consistency_proofEquilibria}), an equilibrium corporate culture strength $\pi$ such that all agents make positive investments is a fixed point of $P$.} 
\begin{equation}
P(\pi) =  \underline{\pi} + \sum_{i } w_{k,i}h\Big(X_i(\pi)\Big)=\underline{\pi} + \sum_{i } w_{k,i}h\Big(g^{-1}\Big( a_iw_{k,i}\rho'(\pi) \Big)\Big).
\label{eq:FP_proofEquilibria}
\end{equation}

Fix any $k$. Note $P(1)=\underline \pi$. Moreover, by what we have said about $g$ above, as $\pi \to \pi_{\crit}$ from above, we have $P(\pi) \to 1$. Therefore, using continuity of the right-hand side of (\ref{eq:FP_proofEquilibria}) and the intermediate value theorem, there is a fixed point of $P$ with $\pi \geq \pi_{\crit}$. From now on, let $\pi^*(k)$ denote the largest such fixed point, which satisfies $\pi^*(k) > \pi_{\crit}$. To this $\pi^*(k)$ corresponds an investment profile given by $x^*(k)$ with $x_i^*(k)=X_i(\pi^*(k))=g^{-1}\Big( a_iw_{k,i}\rho'(\pi^*) \Big)$ that satisfies the first-order conditions for investment.

Each individual's investment problem is  concave in own investment $x_i$ in the domain of $x_i$ such that $\pi_k((x_i,x_{-i}^*(k)),\underline{\pi}) \geq \pi_{\crit}$. Thus there is exactly one interior optimum among such $x_i$. To establish this is a global optimum, we check $i$'s payoff is higher at $X_i(\pi^*(k))$ than the payoff of $0$ obtained by any $x_i$ with $\pi_k((x_i,x_{-i}^*(k)),\underline{\pi}) < \pi_{\crit}$ (if this is possible given others' contributions). To check this, we first note that there is an $\overline{x}$ such that $X_i(\pi^*(k))<\overline{x}<1$ for all $i$; otherwise,  by the assumption on weight ratios, all $X_i(\pi^*(k))$ would tend to $1$, so $\pi^*(k)$ would approach $1$, contradicting (b). Now since $\pi^*(k)\to\pi_\crit$ by (b), it suffices to check that $a_i \rho(\pi_{\crit})-c(\overline{x})$ is positive, which we have by the assumption that $\underline{a}$ is large enough. \end{proof}

\newpage

{\footnotesize \singlespacing \bibliography{main}}
\bibliographystyle{ecta}

\includepdf[pages=-]{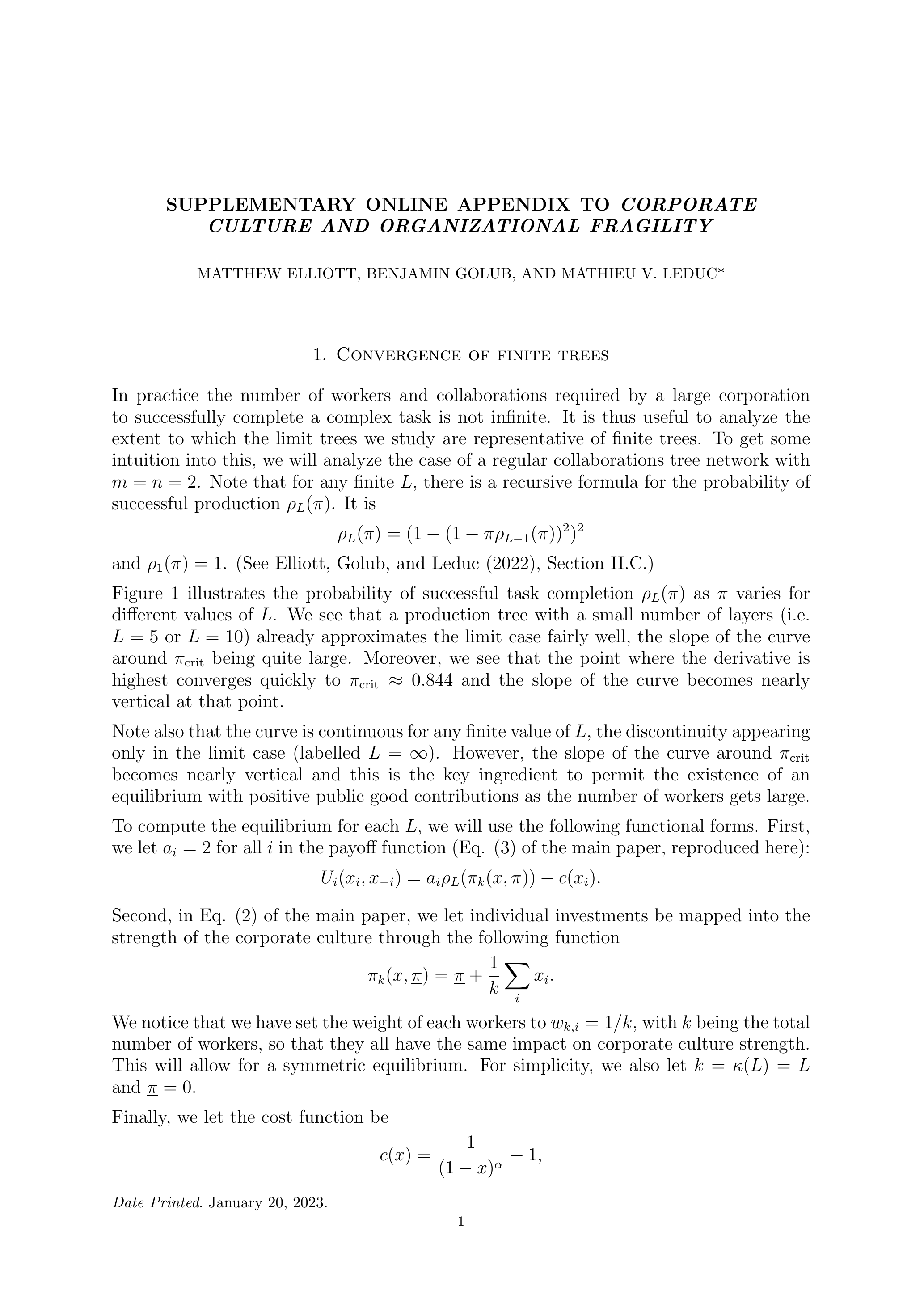}

\end{document}


\title[]{Supplementary Online Appendix to \emph{Corporate Culture and Organizational Fragility}}

\author{Matthew Elliott \and Benjamin Golub \and Mathieu V. Leduc*}

\date{\today}

\maketitle

\section{Convergence of finite trees}

In practice the number of workers and collaborations required by a large corporation to successfully complete a complex task is not infinite. It is thus useful to analyze the extent to which the limit trees we study are representative of finite trees. To get some intuition into this, we will analyze the case of a regular collaborations tree network with $m=n=2$. Note that for any finite $L$, there is a recursive formula for the probability of successful production $\rho_L(\pi)$. It is
\begin{equation*}
\rho_L(\pi) = (1 - (1-\pi\rho_{L-1}(\pi)  )^2 )^2
\end{equation*}
and $\rho_1(\pi) = 1$. (See \cite*{elliott2022supply}, Section II.C.)

Figure \ref{fig:r_vs_pi_kvaried} illustrates the probability of successful task completion $\rho_L(\pi)$ as $\pi$ varies for different values of $L$. We see that a production tree with a small number of layers (i.e. $L=5$ or $L=10$) already approximates the limit case fairly well, the slope of the curve around  $\pi_{\crit}$ being quite large. Moreover, we see that the point where the derivative is highest converges quickly to $\pi_{\crit}\approx 0.844$ and the slope of the curve becomes nearly vertical at that point.

Note also that the curve is continuous for any finite value of $L$, the discontinuity appearing only in the limit case (labelled $L=\infty$). However, the slope of the curve around $\pi_{\crit}$ becomes nearly vertical and this is the key ingredient to permit the existence of an equilibrium with positive public good contributions as the number of workers gets large.

To compute the equilibrium for each $L$, we will use the following functional forms. First, we let $a_i=2$ for all $i$ in the payoff function (Eq. (\ref{ShortPaper-eq:ExpPayoffMainPaper}) of the main paper, reproduced here):
\begin{equation*}
U_i(x_i,x_{-i}) = a_i\rho_L(\pi_k(x,\underline{\pi}))  - c(x_i).
\end{equation*}

Second, in Eq. (\ref{ShortPaper-eq:pi_LMainPaper}) of the main paper, we let individual investments be mapped into the strength of the corporate culture through the following function
$$\pi_k(x,\underline \pi)=\underline \pi+\frac{1}{k}\sum_i x_i.$$
 We notice that we have set the weight of each workers to $w_{k,i}=1/k$, with $k$ being the total number of workers, so that they all have the same impact on corporate culture strength. This will allow for a symmetric equilibrium. For simplicity, we also let $k=\kappa(L)=L$ and $\underline \pi=0$.

 Finally, we let the cost function be 
 $$c(x) = \frac{1}{(1-x)^\alpha}-1,$$
 with $\alpha = 0.08$. 
 
 With these functional forms, the `stars' in Figure \ref{fig:r_vs_pi_kvaried} correspond to the equilibrium points $\big(\pi_k(x^*,\underline \pi),\rho_L(\pi_k(x^*,\underline \pi))\big)$ for different values of $L$.

 Note that the sensitivity of the equilibrium probability of completing a complex task $\rho_L(\pi_k(x^*,\underline \pi))$ to an exogenous shock to corporate culture strength increases with $L$. Taking the example of a collaborations tree network with $L=10$ layers, which approximates the infinite network reasonably well, we see that an exogenous shock $s$ resulting in a drop in the corporate culture strength of a little over  $10\%$  (from $\pi_{10}(x^*,\underline \pi) \approx 0.85$ to $\pi_{10}(x^*,\underline \pi-s)=\pi_{10}(x^*,\underline \pi) - s \approx 0.75)$ leads to a drop of nearly $100\%$ in the successful completion probability   (from $\rho_{10}(\pi_{10}(x^*,\underline \pi)) \approx 0.88$ to about $0$).

\begin{figure*}
  \centerline{
\includegraphics[scale=0.50]{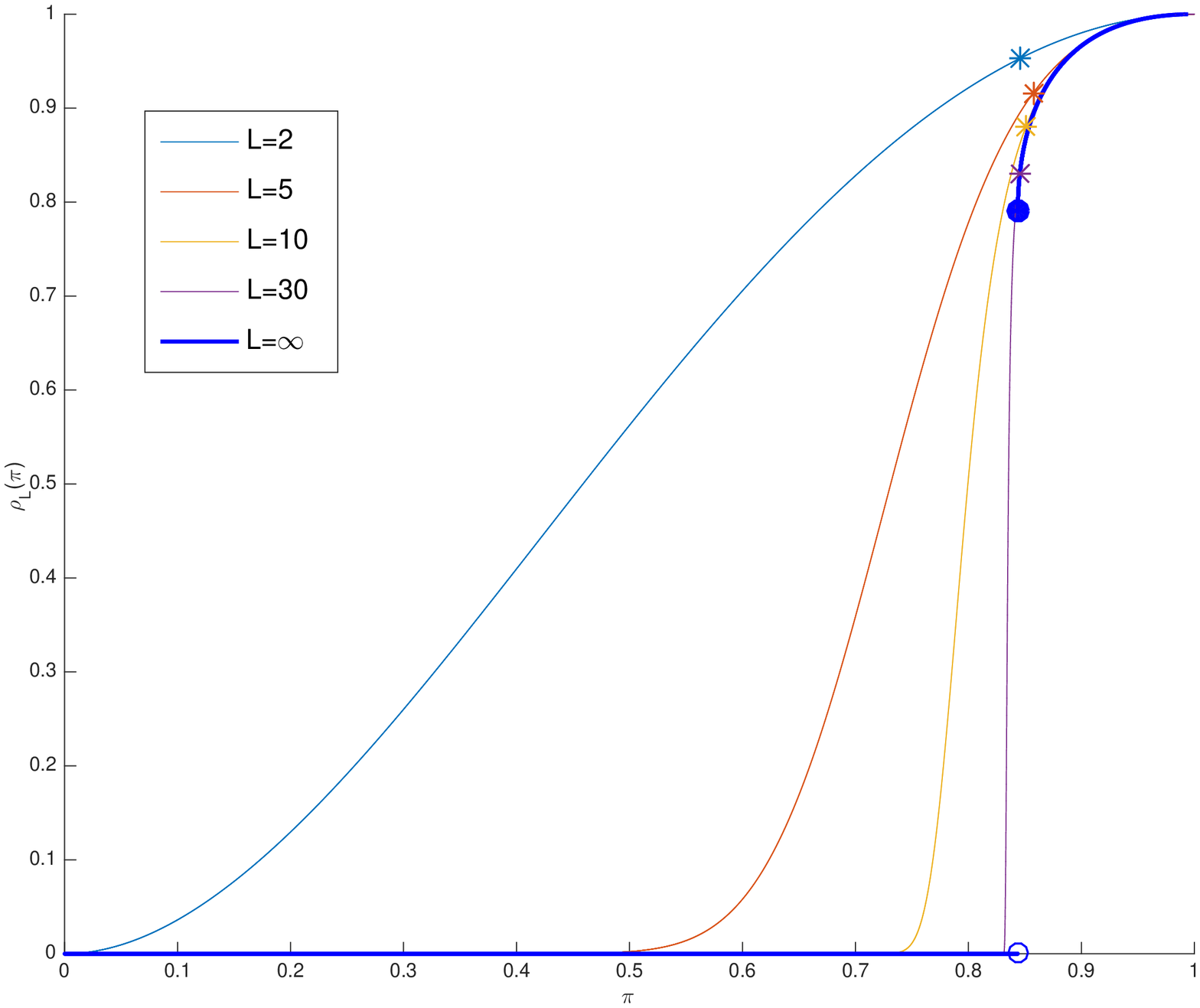}
  }
  \caption{Probability of complex task completion $\rho_L(\pi)$ as a function of the strength of corporate culture $\pi$ for various values of $L$ (number of layers) when the collaborations tree network is regular ($n=2$ and $m=2$). When $L=\infty$, the critical point of discontinuity is $(\pi_{\crit}, \rho(\pi_{\crit})) \approx (0.844, 0.790) $. The star on each curve with finite $L$ corresponds to the equilibrium point for that particular number of layers.}
  \label{fig:r_vs_pi_kvaried}
\end{figure*}

\section{Fragility with finite tree depth}

In this subsection, we show the main message of our analysis of fragility extends to networks with $L<\infty$ layers.

\begin{proposition}\label{prop:FragilityWithSequences}

(i) For any $\zeta>0$, there exist $\underline L$ and $\underline k$ such that for all $L>\underline L$ and $k>\underline k$, there exists an equilibrium with $x_i^*<\zeta$ for all $i$.
(ii) Fix any $\epsilon,\eta>0$. There exist $\underline L$ and $\underline k$, such that when $L>\underline L$ and $k> \underline k$, then any equilibrium investment profile $x^*$ such that $\pi_k(x^*,\underline \pi)>\pi_{\crit}$ is $(\epsilon,\eta)$-fragile in the sense that $\rho_L(\pi_k(x^*,\underline \pi-s))<\eta$ for all shocks $s>\epsilon$. 

\end{proposition}

We first show Part (i):

Recall from the previous section that $\rho_L(\pi)$ can be expressed as
\begin{equation*}
\rho_L(\pi) = (1 - (1-\pi\rho_{L-1}(\pi)  )^n )^m
\end{equation*}
with $\rho_1(\pi) = 1$.

Consider an investment profile $\hat x$ such that $\hat x_i=0$ for all $i$. Then for all $L$ and $k$, any agent $i$ has marginal benefit higher than marginal cost:
\begin{eqnarray}
a_i\left(\frac{\partial \rho_L(\pi_{k}(\hat x,\underline \pi))}{\partial \pi_{k}(\hat x,\underline \pi)}\right)\left(\frac{\partial \pi_{k}(\hat x,\underline \pi)}{\partial x_i}\right) > c'(\hat x_i)=0\nonumber
\end{eqnarray}
since, by Assumption \ref{ShortPaper-ass:payoffMainPaper} of the main paper, $c'(0) = 0$, while $\rho_L(\pi)$ is a strictly increasing function on $(0,1)$ and thus $\rho'_L(\pi)>0$ for all $\pi \in[\underline \pi,1)$ (and here we let $\underline \pi>0$).

Now consider an investment profile $\hat x^\zeta$ such that $\hat x_i^\zeta=\zeta>0$ for all $i$, where $\zeta$ is arbitrary. Then, there exist $\underline L$ and $\underline k$ such that for all $L>\underline L$ and $k>\underline k$, any agent $i$ has marginal benefit lower than marginal cost:
\begin{eqnarray}
a_i\left(\frac{\partial \rho_L(\pi_{k}(\hat x^\zeta,\underline \pi))}{\partial \pi_{k}(\hat x^\zeta,\underline \pi)}\right)\left(\frac{\partial \pi_{k}(\hat x^\zeta,\underline \pi)}{\partial x_i}\right) < c'(\hat x_i^\zeta)\nonumber
\end{eqnarray}
since, by Assumption \ref{ShortPaper-ass:payoffMainPaper} of the main paper, $c'(x_i) > 0$ for all $x_i>0$, while $\lim_{k\rightarrow \infty} \frac{\partial \pi_{k}(x,\underline \pi)}{\partial x_i}=\lim_{k\rightarrow \infty} w_{k,i}=0$ by Assumption \ref{ShortPaper-ass:lim_w_Li} of the main paper and $\frac{\partial \rho_L(\pi_{k}(x,\underline \pi))}{\partial \pi_{k}(x,\underline \pi)}$ is finite (note that $\lim_{L\rightarrow \infty} \frac{\partial \rho_L(\pi_{k}(x,\underline \pi))}{\partial \pi_{k}(x,\underline \pi)}$ is finite  except at a point of discontinuity of $\rho$, by Lemma \ref{ShortPaper-lem:rho_disc} of the main paper). 

As for all finite $L$, $\frac{\partial\rho_L(\pi_{k}(x,\underline{\pi}))}{\partial \pi_{k}(x,\underline{\pi})}$ is continuous in $\pi_{k}(x,\underline{\pi})$ and $\pi_{k}(x,\underline{\pi})$ is continuous in $x_i$, by the intermediate value theorem there must exist  
investments $x_i=x$ for which marginal benefits equal marginal costs. Thus there exists an equilibrium with $x_i^*=x< \zeta$ for all $i$, when $L>\underline L$ and $k>\underline k$.

We now show Part (ii):

In any equilibrium such that $\pi_{k}(x,\underline \pi)\geq \pi_{\crit}$, each agent $i$'s net marginal utility from contributing must be zero: $\frac{dU_i(x)}{dx_{i}}=0$.  Consider any sequence of equilibrium investment levels for each agent, $\hat{x}_i(L,k)$, such that, for any $i$ \begin{eqnarray}
\left[a_i\bigg(\frac{\partial\rho_L(\pi_{k}(x,\underline{\pi}))}{\partial\pi_{k}(x,\underline{\pi})}\bigg)\bigg(\frac{\partial\pi_{k}(x,\underline{\pi})}{\partial x_i}\bigg)\right]\bigg|_{x_i= \hat{x}_i(L,k)}&=&c'( \hat{x}_i(L,k))\nonumber.
\end{eqnarray}

Consider any sequence of values $(L,k)$ diverging to infinity along both coordinates. Note that when $\pi_k(\hat{x}(L,k),\underline \pi)>\pi_{\crit}$, then $\hat{x}(L,k)$ is bounded away from zero. Thus the limit of the right-hand side of the above equation is strictly positive. Since $\frac{\partial\pi_{k}(x,\underline{\pi})}{\partial x_i}$ tends to $0$ and the $a_i$ are uniformly bounded, $\frac{\partial\rho_L(\pi_{k}(x,\underline{\pi}))}{\partial\pi_{k}(x,\underline{\pi})}$ tends to $\infty$.  By Lemma \ref{ShortPaper-lem:rho_disc} of the main paper, it follows that $\pi_k(\hat{x}(L,k),\underline{\pi})=\underline{\pi}+\sum_i w_{k,i}h(\hat{x}_i(L,k))$ tends to $\pi_{\text{crit}}$. 

It follows that for any $\epsilon,\eta>0$, there exist $\underline L$ and $\underline k$, such that when $L>\underline L$ and $k> \underline k$, then $\rho_L(\pi_k(x,\underline \pi-s))<\eta$ for all shocks $s>\epsilon$. A small shock thus causes a near complete collapse in the organization's ability to complete the complex project.

\section{A generalized model with anticipated shocks}
\label{sec:GenModelForeseeableShocks}

We suppose that in layer $1$, the worker $i$ that must complete the complex task requires $m_i$ inputs, where $m_i$ is drawn from a probability distribution $p$ with support $[\underline m, \underline m + 1,...,\bar m]$ and that for each input there are $n_i$ potential providers of that input, drawn from a probability distribution $q$ with support $[\underline n, \underline n + 1,...,\bar n]$. We take $\underline m \geq 2$ and $\underline n \geq 1$. Layer $2$ is much the same. An input provider $j$ in this layer requires $m_j$ inputs himself, where $m_j$ is again drawn from the probability distribution $p$, and has $n_j$ potential suppliers of each input, where $n_j$ is drawn from the probability distribution $q$. The process continues in this way until layer $L$ in which task completion requires no inputs. Thus, in essence, the tree is a random branching process up to the last layer, with the number of required varieties at each node drawn from the distribution $p$, and the number of task providers available in that variety is drawn according to $q$.

The timing of the game is as follows. At time $0$, investments into corporate culture %
$x_i \in [0,1]$ are made simultaneously.  The cost of investment $x_i$ is $c(x_i)$ and is borne by $i$. Let $x$ denote the profile of investment choices. The corporate culture strength is determined as a function $\pi_k(x,\underline \pi,s)$ of the investments $x$, a baseline level of corporate culture $\underline \pi$, such that $\pi_k(\vec{0},\underline \pi,0)=\underline \pi$ when all agents choose $x_i=0$ (i.e. $x=\vec{0}$) and a shock $s$. 
At time $1$, the random potential collaborations tree network is formed, with each node $j$ requiring $m_j \sim p(m)$ inputs in the layer below and each input being potentially provided by $n_j \sim q(n)$ workers, as previously explained. Moreover, a shock occurs with probability $\psi>0$. When a shock occurs, its magnitude $s$ is drawn from a probability density function $f(s)$. Each link associated with the complex task then fails with probability $1 - \pi_k(x,\underline \pi,s)$ and the final network is realized. This determines whether the task can be completed successfully or not.

At time $0$, when making his investment, the expected payoff of worker $i$ is

\begin{equation}
U_i(x_i,x_{-i}) = \underbrace{(1-\psi) a_i \rho_L(\pi_k(x,\underline{\pi},0)) + \psi a_i \int_s \rho_L(\pi_k(x,\underline{\pi},s)) f(s) ds}_{\text{Expected payoff from the complex project}} \;\ - \;\ \underbrace{c(x_i)}_{\mathclap{\text{Cost of investment}}}.
\end{equation}

As in the main paper, we will be mostly interested in the model with an infinite number of layers ($L=\infty$) and will show that $\rho_L(\pi)$ converges uniformly to $\rho(\pi)$.

We make the following assumption:

\begin{assumption}
	\
	\begin{itemize}
		\item[(i)] $c(x_i)$ is smooth, convex with $c(0)=0$ and $c'(x_i)>0$ for all $x_i>0$, $c'(0)=0$ and $\lim_{x_i \uparrow 1 }c'(x_i)=\infty$.
		\item[(ii)] $\pi_k(x,\underline \pi,s)$ is a function $\pi_k : [0,1] \times [0,1]  \times [\underline{s}, \overline{s}] \rightarrow [0,1]$, which has the following properties:
	\begin{itemize}
    \item[(a)] Smooth, concave and strictly increasing in each $x_i$, strictly increasing in $\underline \pi$ and strictly decreasing in $s$. When all agents choose $x_i=0$ (i.e. $x=\vec{0}$), then $\pi_k(\vec{0},\underline \pi,0)=\underline \pi$. 
    \item[(b)] $\max_i \partial \pi_k(x,\underline \pi,s)/\partial x_i \to 0$ as $k \to \infty$, for all possible contribution profiles $x$.

    \end{itemize}
		\item[(iii)] $f(s)$ is continuous and has support $[\underline{s},\overline{s}]$ with $\overline{s}>\underline{s}>0$.
		\end{itemize}

	\label{ass:payoff}
\end{assumption}

Assumption SA \ref{ass:payoff}(i) is standard for cost functions. The
restriction that $\pi_k$ is strictly increasing in $x_i$ means that whenever there are a finite number of agents, each individual contribution strictly increases corporate culture strength. Condition (ii)(b) requires that as the number of workers $k$ gets large, the free-rider effect becomes severe---each individual contribution has a negligible impact on overall corporate culture strength. Finally, Assumption SA \ref{ass:payoff}(iii) implies that a shock of any particular size $s$ is a zero-probability event and requires all shocks (when they occur) to be at least of size $\underline{s}>0$, ruling out shocks of size $0$.

Some additional definitions will be helpful.

\begin{definition}[$(\epsilon,\gamma)$-fragility]
\label{def:fragility_appendix}
An equilibrium profile $x^*$ is $(\epsilon,\gamma)$-\emph{fragile} if there is a corporate culture strength $\tilde \pi$ such that
\begin{enumerate}
\item $\pi_k(x^*,\underline \pi, 0)\geq \tilde \pi >\pi_k(x^*,\underline \pi, s)$ for all shocks $s>\epsilon$ %
\item there is a positive constant $\gamma>0$ such that $\lim_{\pi\downarrow \tilde \pi}\rho(\pi)-\lim_{\pi\uparrow  \tilde \pi}\rho(\pi)> \gamma$.
\end{enumerate}
\end{definition}
Thus, an $(\epsilon,\gamma)$-fragile equilibrium is perched at or close to a precipice, i.e. $\tilde \pi$ is a point of discontinuity of $\rho$.

We can also define a stronger notion of fragility (which is a particular case of the above) which requires the complete collapse of the organization's ability to complete the complex task. This corresponds to Definition \ref{ShortPaper-def:fullFragMainPaper} of the main paper.

\begin{definition}[$\epsilon$-fragility]
An equilibrium profile $x^*$ is $\epsilon$-\emph{fragile} if  $\rho(\pi_k(x^*,\underline \pi,s))=0$, for any $s>\epsilon>0$. That is, the probability of completing the complex project falls to $0$ following a negative shock of size greater than $\epsilon$ to the corporate culture strength.
\end{definition}

\begin{proposition}\label{prop:main3}
There exists $\underline k$ such that for $k>\underline k$, there always exists a (no contribution) equilibrium with $x^*_i=0$ for all $i$. Moreover, for any $\epsilon>0$, there exists $\underline k(\epsilon)$ and $\gamma>0$ such that for $k>\underline k(\epsilon)$, any equilibrium  with a positive investments profile $x^*$ is $(\epsilon,\gamma)$-fragile and $\lim_{k\to \infty} \pi_k(x^*,\underline \pi,0)= \pi_j$, where $\pi_j \in \Pi$ is some point of discontinuity of $\rho(\pi)$ (and $\Pi$ is the set of points discontinuity of $\rho(\pi)$).
\end{proposition}

We see that Proposition \ref{ShortPaper-prop:main4} of the main paper is a corollary of Proposition SA \ref{prop:main3}, when the collaborations tree is regular ($m$ and $n$ are fixed, and thus $\rho(\pi)$ has a single point of discontinuity, as per Lemma \ref{ShortPaper-lem:rho_disc} of the main paper) and when shocks are not anticipated (corresponding to the case $\psi=0$). We formally recall it below.

\begin{corollary}\label{cor:main4}

Suppose $p$ and $q$ are degenerate, i.e $p(m)=1$ and $q(n)=1$ for some $m\geq 2$ and $n\geq 1$. Suppose further that $\psi=0$.
There exists $\underline k$ such that for $k>\underline k$, there always exists a (no contribution) equilibrium with $x^*_i=0$ for all $i$. Moreover, for any $\epsilon>0$, there exists $\underline k(\epsilon)$ such that for $k>\underline k(\epsilon)$, any equilibrium  with a positive investments profile $x^*$ is $\epsilon$-fragile and $\lim_{k\to \infty} \pi_k(x^*,\underline \pi,0)= \pi_{\crit}$.

\end{corollary}

\section{Additional Results and Proofs}\label{sec:proof}

First we present a number of lemmas that are used to prove our main results. We then prove the main results using these lemmas, before proving the lemmas themselves.

\subsection{Lemmas}

\begin{lemma}[Successful task completion in the limit]\label{lemma:rho}
\begin{equation}
\rho(\pi)\equiv\lim_{L\rightarrow \infty}\rho_L(\pi) = \sup \left\{r:r=\sum_{m =0}^\infty p(m)\left[1- \sum_{n=0}^\infty q(n)\left(1-\pi r\right)^n\right]^m\right\}.
 \label{eqn:rho}
 \end{equation}
\end{lemma}

\vspace{0.2in}

\begin{lemma}[Limit properties of $\rho_L(\pi)$]\label{lemma:rho_properties}
Suppose that $\underline{m}\ge2$, $\underline{n}\geq1$, and $p$
and $q$ are generic. Then the function $\rho(\pi)=\rho_{\infty}(\pi)$ has the following
properties:
\begin{enumerate}
\item It is equal to $0$ at all $\pi$ in a nonempty interval $[0,\overline \pi )$
and positive for $\pi \in [\overline \pi,1]$. Moreover, if $\underline{n} \geq 2$, then $\overline \pi \in (0,1)$.
\item It is differentiable except at finitely many jump discontinuities.
\item It is right-continuous.
\item The function $\rho$ is discontinuous at some $\pi_i \in (0,1)$ if and only if $$\lim_{\pi\downarrow\pi_i}\rho'(\pi)=\infty.$$ Moreover there is a nonempty interval $(\pi_i,\pi_i^+)$ on which $\rho$ is concave.
\end{enumerate}
\end{lemma}

\vspace{0.2in}

\begin{lemma}[Continuity and uniform convergence of $\rho_L(\pi)$]\label{lem:discontinuities}
Let $\mathbf{\Pi}$ be the set of discontinuities of $\rho(\pi)$.
\begin{enumerate}
	 \item For each $L$, the function $\rho_L(\pi)$ defined for $\pi \in (0,1)$ is strictly increasing and infinitely differentiable.
	\item Let $(\pi_i, \pi_{i+1})$ be the open interval between  two consecutive points of $\mathbf{\Pi}$, with $\pi_0=0$ and $\pi_{|\mathbf{\Pi}|+1}=1$.  Then $\{\rho_L(\pi)\}_{L=1}^{\infty}$ converges uniformly to $\rho(\pi)$ on $(\pi_i, \pi_{i+1})$. %
	
\end{enumerate}

\end{lemma}

\vspace{0.2in}

\begin{lemma}
\label{lem:ZeroExpect}
For any $x$,

$$\lim_{k\rightarrow \infty} \mathbb{E}\left[\frac{\partial\rho(\pi_k(x,\underline{\pi},s))}{\partial \pi_k(x,\underline{\pi},s)}\frac{\partial\pi_k(x,\underline{\pi},s)}{\partial x_i}\right]=0.$$

where the expectation is taken over the pdf $f(s)$.
\end{lemma}

\vspace{0.2in}

\begin{lemma}[Regular trees]\label{lem:degenerate_pq}
Suppose $p(m)=1$ and $q(n)=1$ for some $m\geq 2$ and $n\geq 1$. Then $\rho(\pi)$ has a unique point of discontinuity at $\pi_{crit}\in(0,1]$ and for all $\pi<\pi_{crit}$, we have $\rho(\pi)=0$.
\end{lemma}

Lemma SA \ref{lem:degenerate_pq} corresponds to the case of Lemma \ref{ShortPaper-lem:rho_disc} of the main paper.

\subsection{Proof of Proposition SA \ref{prop:main3}}

The uniform convergence properties of $\rho_L(\pi)$ to $\rho(\pi)$ are established in Lemma SA \ref{lem:discontinuities}. We thus consider worker $i$'s expected utility in an infinite length ($L=\infty$) tree:

$$U_i(x_i,x_{-i}) = (1-\psi) a_i \rho(\pi_k(x,\underline{\pi},0)) + \psi a_i \int_s \rho(\pi_k(x,\underline{\pi},s)) f(s) ds  -  c(x_i).$$

By the chain rule, the marginal utility of a worker $i$ from increasing his investment is given by

\begin{eqnarray}
\frac{dU_i(x)}{dx_{i}}&=&\left[(1-\psi)a_i\bigg(\frac{\partial\rho(\pi_k(x,\underline{\pi},0))}{\partial\pi_k(x,\underline{\pi},0)}\bigg)\bigg(\frac{\partial\pi_k(x,\underline{\pi},0)}{\partial x_i}\bigg)\right.\nonumber\\
& &\quad\quad\left.+\psi a_i\mathbb{E}\left[\bigg(\frac{\partial\rho(\pi_k(x,\underline{\pi},s))}{\partial \pi_k(x,\underline{\pi},s)}\bigg)\bigg(\frac{\partial\pi_k(x,\underline{\pi},s)}{\partial x_i}\bigg)\right]\right]-c'(x_i),
\label{eq:dUdxi_general-1_full}
\end{eqnarray}
where the expectation is taken over the pdf $f(s)$. The first term is the marginal benefits to $i$ of increasing $x_i$ and the second term is $i$'s marginal cost.

To make the notation more concise, let us write the expectation term as
$$\theta_{i,k}(x) := \psi a_i\mathbb{E}\left[\frac{\partial\rho(\pi_k(x,\underline{\pi},s))}{\partial \pi_k(x,\underline{\pi},s)}\frac{\partial\pi_k(x,\underline{\pi},s)}{\partial x_i}\right].$$

Denoting $\frac{\partial\rho(\pi_k(x,\underline{\pi},0))}{\partial\pi_k(x,\underline{\pi},0)}\Big|_{\pi_k(x,\underline{\pi},0)=\pi}=\rho'(\pi)$, 
the first-order optimality condition ($\frac{dU_i(x)}{dx_{i}}=0$) can be stated as 
\begin{equation}
(1-\psi)a_i \rho'(\pi) \frac{\partial\pi_k(x,\underline{\pi},0)}{\partial x_i}  + \theta_{i,k}(x) = c'(x_i),
\label{eq:FOC_proofEquilibria_General}
\end{equation}
where
\begin{equation}
\pi  = \pi_k(x,\underline{\pi},0).
\label{eq:Consistency_proofEquilibria_General}
\end{equation}
A positive investment equilibrium is thus a profile $x^*$ satisfying Eqs. (\ref{eq:FOC_proofEquilibria_General}) and (\ref{eq:Consistency_proofEquilibria_General}) for all $i$, and such that $U_i(x^*)>0$.

Define $g(x_i) = c'(x_i)$. Then from Eq. (\ref{eq:FOC_proofEquilibria_General}), we have that 
$$ x_i = g^{-1}\Big( (1-\psi)a_i \rho'(\pi) \frac{\partial\pi_k(x,\underline{\pi},0)}{\partial x_i} + \theta_{i,k}(x) \Big) $$
and thus using Eq. (\ref{eq:Consistency_proofEquilibria_General}), an equilibrium corporate culture strength $\pi^*$ will be a fixed point of $P(\pi)$, where
\begin{equation}
P(\pi) = \pi_k(\bigl\{g^{-1}\big( (1-\psi)a_i \rho'(\pi)  \frac{\partial\pi_k(x,\underline{\pi},0)}{\partial x_i}\big|_{x:\pi_k(x,\underline{\pi},0)=\pi} +  \theta_{i,k}(x)\big|_{x:\pi_k(x,\underline{\pi},0)=\pi}  \big)\bigl\}_{i=1}^k,\underline{\pi},0).
\label{eq:FP_proofEquilibria_General}
\end{equation}

We first focus on a productive equilibrium, which can only happen if $\pi \geq \overline \pi$, since $\rho(\pi)=0$ for $\pi<\overline \pi$ by Lemma SA \ref{lemma:rho_properties}. 

Let $\mathbf{\Pi}$ be the set of discontinuities of $\rho(\pi)$ and let $(\pi_j, \pi_{j+1})$ be the open interval between  two consecutive points of $\mathbf{\Pi}$, with $\pi_0=0$, $\pi_1=\overline \pi$  and $\pi_{|\mathbf{\Pi}|+1}=1$. 

Over any interval $[\pi_j, \pi_{j+1})$, with $j\geq 1$, $P(\pi)$ is maximized at $\pi=\pi_j$. This follows from %
$g^{-1}=(c')^{-1}$ being an increasing function, while $\lim_{\pi \downarrow \pi_j}\rho'(\pi)=\infty$ and $\rho'(\pi)$ is finite %
for $\pi \in (\pi_j, \pi_{j+1})$ by Lemma SA \ref{lemma:rho_properties}.

The fact that $g^{-1}$ is an increasing function follows from the fact that $g(x_i)=c'(x_i)$ is increasing. %
Moreover, since $\lim_{x_i \to 0} g(x_i) = 0$ and $\lim_{x_i \to 1} g(x_i) = \infty$, then it follows that $\lim_{y \to 0} g^{-1}(y) = 0$ and $\lim_{y \to \infty} g^{-1}(y) = 1$.

Recall from Assumption SA \ref{ass:payoff}(ii)(a)  that $\pi_k(\vec{0},\underline{\pi},0)=\underline \pi < \pi_j$. Note also that since $\rho'(\pi)$ is finite for any $\pi \in (\pi_j, \pi_{j+1})$, while $\lim_{k\to \infty}\frac{\partial\pi_k(x,\underline{\pi},0)}{\partial x_i} = 0$ by Assumption SA \ref{ass:payoff}(ii)(b), and since $\lim_{k\rightarrow \infty} \theta_{i,k}(x)=0$ by Lemma SA \ref{lem:ZeroExpect}, then the argument of $g^{-1}()$ in  Eq. (\ref{eq:FP_proofEquilibria_General}) goes to $0$ as $k \to \infty$. It thus follows from Eq. (\ref{eq:FP_proofEquilibria_General}) that there are $\underline k$ and $\tilde \pi_j \in  (\pi_j, \pi_{j+1})$ such that when $k>\underline k$, then $P(\pi)<\pi_j$ for all $\pi \in [\tilde \pi_j, \pi_{j+1})$.

Since $P(\pi)$ is maximized at $\pi_j$ and $P(\pi)<\pi_j$ for all $\pi \in [\tilde \pi_j, \pi_{j+1})$, then by the intermediate value theorem, there exists, on the interval $[\pi_j, \pi_{j+1})$,  at least one fixed point with $\pi^* \geq \pi_j $ if $P(\pi_j) \geq \pi_j$. From now on, let $\pi^*$ denote the largest such fixed point (i.e. equilibrium). To this $\pi^* = \pi_k(x^*,\underline \pi,0)$ corresponds an equilibrium investment profile $x^*$ with %
$x_i^*=g^{-1}\Big( (1-\psi)a_i \rho'(\pi^*)\frac{\partial\pi_k(x^*,\underline{\pi},0)}{\partial x_i} + \theta_{i,k}(x^*)   \Big)$, $\forall i$. On the other hand, if $P(\pi_j)<\pi_j$, there does not exist a fixed point (and thus an equilibrium) with $\pi^* \geq \pi_j $ on the interval $ ( \pi_j, \pi_{j+1})$.

Note that from Eq. (\ref{eq:FP_proofEquilibria_General}), %
$P(\pi_j)=\pi_k(\vec{1},\underline \pi,0)$, where $x=\vec{1}$ denotes an investment profile in which all agents choose $x_i=1$. Thus, if $\pi_k(\vec{1},\underline \pi,0)< \pi_j$,
 then $P(\pi_j)<\pi_j$ and thus there does not exist a productive equilibrium on the interval $ ( \pi_j, \pi_{j+1})$. On the other hand, if $\pi_k(\vec{1},\underline \pi,0)> \pi_j$, then $P(\pi_j) > \pi_j$ and thus there exists a productive equilibrium on the interval $ ( \pi_j, \pi_{j+1})$.

Now note that from Eq. (\ref{eq:FP_proofEquilibria_General}), 
 for any $\pi \in ( \pi_j, \pi_{j+1})$, $\lim_{k\to \infty} P(\pi)=\underline \pi$, since $\frac{\partial\pi_k(x,\underline{\pi},0)}{\partial x_i} \to 0$ by Assumption SA \ref{ass:payoff}(ii)(b), while $\rho'(\pi)$ is finite by Lemma SA \ref{lemma:rho_properties}, and since $\lim_{k\rightarrow \infty} \theta_{i,k}(x)=0$ by Lemma SA \ref{lem:ZeroExpect}.
It follows that, if $P(\pi_j) \geq \pi_j$, there exists a unique %
productive equilibrium on the interval $[\pi_j,\pi_{j+1})$ in the limit as $k \to \infty$, with $\lim_{k\to \infty} \pi_k(x^*,\underline \pi,0) :=\pi(x^*,\underline \pi,0)= \pi_j$. Any shock $s>0$ to the corporate culture strength then results in a corporate culture strength $\pi(x^*,\underline \pi,s) <\pi_j$ and thus a discontinuous drop in the probability of successful task completion, i.e. $\rho(\pi(x^*,\underline \pi,0))-\rho(\pi(x^*,\underline \pi,s))>\gamma$, for some $\gamma>0$.  %
Thus, for any $\epsilon>0$, there exists $\underline k(\epsilon)$ and $\gamma>0$ such that for $k>\underline k(\epsilon)$, any equilibrium  with a positive investments profile $x^*$ (with $\pi_k(x^*,\underline \pi,0) \in [\pi_j,\pi_{j+1})$)  is $(\epsilon,\gamma)$-fragile by Definition SA \ref{def:fragility_appendix}.

We now show that there always exists a zero contribution equilibrium when $k$ is large enough.

Let $x_{i'}=0$ for $i'\neq i$ and denote this by $x_{-i}=\vec{0}$. An investment profile is then denoted by $(x_i,x_{-i})=(x_i,\vec{0})$. When $k$ is large enough, then for any $x_i \in (0,1]$
$$(1-\psi)a_i\rho(\pi_k((x_i,\vec{0}),\underline \pi,0)) +  \psi a_i \int_s \rho(\pi_k((x_i,\vec{0}),\underline{\pi},s)) f(s) ds =0 < c(x_i)$$
and hence the benefit of investing is smaller than the cost. Indeed, $c(x_i)>0$ for $x_i>0$, while $(1-\psi)a_i\rho(\pi_k((x_i,\vec{0}),\underline \pi,0)) +  \psi a_i \int_s \rho(\pi_k((x_i,\vec{0}),\underline{\pi},s)) f(s) ds=0$ for all $x_i \in [0,1]$ when $\frac{\partial\pi_k((x_i,\vec{0}),\underline{\pi},0)}{\partial x_i}$ and $\frac{\partial\pi_k((x_i,\vec{0}),\underline{\pi},s)}{\partial x_i}$ are small enough (which happens when $k$ is large enough by Assumption SA \ref{ass:payoff}(ii)(b)), since then $\pi_k((1,\vec{0}),\underline \pi,0) \approx \underline \pi < \pi_1=\overline \pi$. Indeed, when $k$ is large enough, any single agent's investment in the corporate culture cannot raise it beyond the critical threshold $\overline \pi$ at which $\rho$ becomes positive. It follows that $x_i=0$ for all $i$ is an equilibrium.

\subsection{Proof of Corollary SA \ref{cor:main4}}

 Proposition \ref{ShortPaper-prop:main4} of the main paper is a corollary of Proposition SA \ref{prop:main3}, when the collaborations tree is regular ($m$ and $n$ are fixed) and when shocks are not anticipated (corresponding to the case $\psi=0$). The reader is thus referred to the proof of Proposition \ref{ShortPaper-prop:main4}.

\subsection{Proof of Lemma SA \ref{lemma:rho}}

Suppose that there are $L$ layers in the collaborations tree. Then the probability of completion of the complex task is given by
$$r_{L} = G_p\left( 1- G_q(1-\pi r_{L-1})\right),$$
where $r_{L-1}$ is the probability of completion of the complex task in a tree with $L-1$ layers. In the above equation, $G_p$ and $G_q$ are the probability generating functions associated with the pmf's $p$ and $q$, respectively.

In a tree with a single layer, we obviously have $r_1=1$ since there are no relations and thus none can fail. The process we are interested in is the limit of $r_L$ as $L$ gets large. Initializing the sequence at $r_1=1$, we have that for all finite $L$, $r_L<r_{L-1}$. Thus, as at each step of the iteration the probability of successful production decreases from an initial value of $1$, the limit probability of successful production is the limit of that sequence, which converges from above to the largest fixed point of
$$ r = G_p\left( 1- G_q(1-\pi r)\right)$$
or equivalently, the largest fixed point of
$$r=\sum_{m =0}^\infty p(m)\left[1- \sum_{n=0}^\infty q(n)\left(1-\pi r\right)^n\right]^m.$$

\subsection{Proof of Lemma SA \ref{lemma:rho_properties}}

\em Part (1): \em

Note that $\rho(\pi)$ is the largest fixed point of
\begin{eqnarray}
r &=& R(r) \nonumber \\
  &:=& \sum_m p(m) \Big[ 1 - \sum_n q(n) (1-\pi r)^n \Big]^m
\label{eq:FPin_r}
\end{eqnarray}

Also note that for any $\pi \in [0,1]$, $r=0$ is always a fixed point of (\ref{eq:FPin_r}). Moreover, when $\pi=1$, then $r=1$ is always a fixed point of (\ref{eq:FPin_r}).

We thus first need to establish that when $\pi$ is low enough, there is no other fixed point than $r=0$. To see this, it is sufficient to show that when $\pi$ is low enough, the derivative of the right-hand side of (\ref{eq:FPin_r}) is always smaller than the derivative of the left-hand side (i.e. always smaller than $1$). Thus, the curve described by the right-hand side never crosses the curve described by the left-hand side, except at $r=0$. For that purpose, note that

\begin{eqnarray}
\frac{d R(r)}{d r} &=& \sum_m p(m) m \Big[1 - \sum_n q(n) (1-\pi r)^n\Big]^{m-1} \sum_n q(n) n (1-\pi r)^{n-1} \pi \label{eq:dRHS_line1}\\
&\leq& \hat m \hat n \pi  \nonumber \  \
\end{eqnarray}
where $\hat m = \sum_m p(m)m$ and $\hat n = \sum_n q(n)n$. We conclude that there must exist $\tilde \pi \in (0,1]$ such that $\frac{d R(r)}{d r} < 1$ for all $\pi < \tilde \pi$. Thus, there must exist $\bar \pi \geq \tilde \pi$ such that  any $\pi <  \bar \pi$,  then $\rho(\pi)=0$.

Secondly, to establish that $\rho(\pi)$ has a discontinuity at  $\bar \pi$, we need to show that for \em any \em $\pi \in [0,1]$, the slope of $R(r)$ around $r=0$ is always close to zero. In other words, for any $\eta>0$ there exits $\bar r$ such that $\frac{d R(r)}{d r} < \eta$ for all $r \in [0,\bar r)$. This is easy to verify by inspection of (\ref{eq:dRHS_line1}). Thus, when for some large enough $\bar \pi$ there exists a fixed point other than $r=0$, it must lie outside a non-empty neighborhood or $r=0$, implying a discontinuity in $\rho(\pi)$ at $\pi=\bar \pi$.

Now to demonstrate that when $\underline n \geq 2$, $\overline \pi \in (0,1)$, it is sufficient to show that the largest fixed point $\rho(\pi)$ of (\ref{eq:FPin_r}) varies continuously  in $\pi$ in a neighborhood of $\pi=1$. This then implies that $\overline \pi < 1$.

Towards showing this, we will let $f(r,\pi) =  \sum_m p(m) \Big[ 1 - \sum_n q(n) (1-\pi r)^n \Big]^m - r$ and use the implicit function theorem to show that there exists a continuous function $r = \rho(\pi)$ solving $f(r,\pi)=0$ in a neighborhood of $\pi=1$ and that this function is increasing in $\pi$, i.e. $\frac{d  \rho(\pi)}{d\pi}>0$ for all $\pi$ in a neighborhood to the left of $\pi=1$.

From the implicit function theorem, we can ascertain the existence of such a continuously differentiable function $\rho(\pi)$ if $\frac{d f(r,\pi)}{d r}|_{r=1,\pi=1} \neq 0$. This condition is satisfied since

\begin{eqnarray}
 \frac{d f(r,\pi)}{d r}\Big|_{r=1,\pi=1}   &=&   \sum_m p(m) m (1 - \sum_n q(n) (1-\pi r)^n)^{m-1} \sum_n q(n) n (1-\pi r)^{n-1} \pi - 1 \Big|_{r=1,\pi=1} \nonumber \\
 &=& -1 \neq 0 \nonumber
 \end{eqnarray}

Moreover, in a neighborhood of $\pi=1$, $\frac{d \rho(\pi)}{d \pi} $ can be expressed as

\begin{eqnarray}
\frac{d \rho(\pi)}{d \pi} &=& - \frac{d f(r,\pi)}{d \pi} \Bigg/ \frac{d f(r,\pi)}{d r} \nonumber \\
 &=&-\frac{ \sum_m p(m) m (1 - \sum_n q(n) (1-\pi r)^n)^{m-1} \sum_n q(n) n (1-\pi r)^{n-1} r}{ \sum_m p(m) m (1 - \sum_n q(n) (1-\pi r)^n)^{m-1} \sum_n q(n) n (1-\pi r)^{n-1} \pi - 1}  \nonumber
\end{eqnarray}

By inspection of the above expression, we see that for $\pi \in (1-\eta,1)$,  $\frac{d \rho(\pi)}{d \pi} \approx - \frac{\eta'}{\eta'' - 1} \approx  \eta'''$, where $\eta, \eta', \eta'', \eta'''>0$ are small numbers.
We conclude that $\rho(\pi)$ is positive and continuous for all $\pi$ in a non-empty neighborhood to the left of $\pi=1$, implying that the first  discontinuity $\bar \pi$ is strictly smaller than $1$, i.e. $\bar \pi \in (0,1)$.

\em Part (2): \em
Define $\Pi:(0,1]\to\mathbb{R}$ by
\begin{equation}
\label{eq:tilde_pi}
\pi:=\Pi(r)=\frac{1-G_{q}^{-1}(1-G_{p}^{-1}(r))}{r}.
\end{equation}

Note that this function is continuously differentiable on its domain. Equation (\ref{eq:tilde_pi}) can be obtained by inverting $r= G_p\left( 1- G_q(1-r\pi)\right)$ to isolate $\pi$ on the left-hand side. Equation (\ref{eq:tilde_pi}) gives us a simple relationship between $\pi$ and $r$ that avoids us having to work directly with the implicitly defined $r$. It is through this transformation, and by using properties of the generating functions, that we will be able to establish the remaining properties documented in Lemma SA \ref{lemma:rho_properties}.

We start with a preliminary observation. Let $\mathbf{\Pi}=\{\pi_1,\pi_2,...,\pi_{|\mathbf{\Pi}|}\}$ be the set of points at which there is a discontinuity in $\rho(\pi)$. Let $(\pi_i, \pi_{i+1})$ be the open interval between  two consecutive points of $\mathbf{\Pi}$, with $\pi_0=0$ and $\pi_{|\mathbf{\Pi}|+1}=1$. Fixing any $\tilde \pi \in [\pi_i,\pi_{i+1})$, and defining $\tilde r =\rho(\tilde \pi)$, the function $\Pi(r)$
achieves a %
minimum at $\tilde r$ on the interval $[\tilde r,r_i^*]$, where $r_i^*=\lim_{\pi  \uparrow  \pi_{i+1}} \rho(\pi)<\rho(\pi_{i+1})$.
If not, and $\Pi(r)$ were strictly lower than $\Pi(\tilde r)$
on this interval, then because $\Pi(r)$ has to reach
the value $\pi_{i+1}$ at $r=r_i^*$, there is (by the intermediate value theorem)
some point $r \in (\tilde r,r_i^*)$ at which $\Pi(r)=\tilde \pi$, contradicting
the definition of $\rho(\pi)$ (as the largest fixed point of Eq. (\ref{eq:FPin_r})) and our assumption that $\tilde r=\rho( \tilde \pi)$.
From this it follows that at $\tilde r$, either $\Pi'(\tilde r)>0$
or $\Pi'(\tilde r)=0$. In the latter case, $\Pi(r)$ has either a local minimum or an inflection point at $\tilde r$.

By genericity of $p$ and $q$, the function $\Pi(r)$ cannot have an inflection point anywhere (we show this formally a bit later) and that possibility is eliminated. Now we can consider the remaining two possibilities. In the first
case, if $\Pi'(\tilde r)>0$, then because $\Pi(r)$
is differentiable, the implicit function theorem guarantees that $\rho(\tilde \pi)$
has a finite derivative (and is therefore also continuous). In the second case, the fact that $\Pi(r)$ has a local minimum
at $\tilde r$
implies (recalling the definition of $\rho(\pi)$)
that $\rho(\pi)$ has a discontinuity at $\tilde \pi = \Pi(\tilde r)$. Setting $\pi_i=\tilde \pi,$
call the coordinates of this discontinuity $(\rho(\pi_i),\pi_i)$.

Figure \ref{fig:Pi_r_and_rho_pi} illustrates the relationship between $\Pi(r)$ and $\rho(\pi)$.

\begin{figure}
    \begin{subfigure}
    \includegraphics[width=0.4\textwidth]{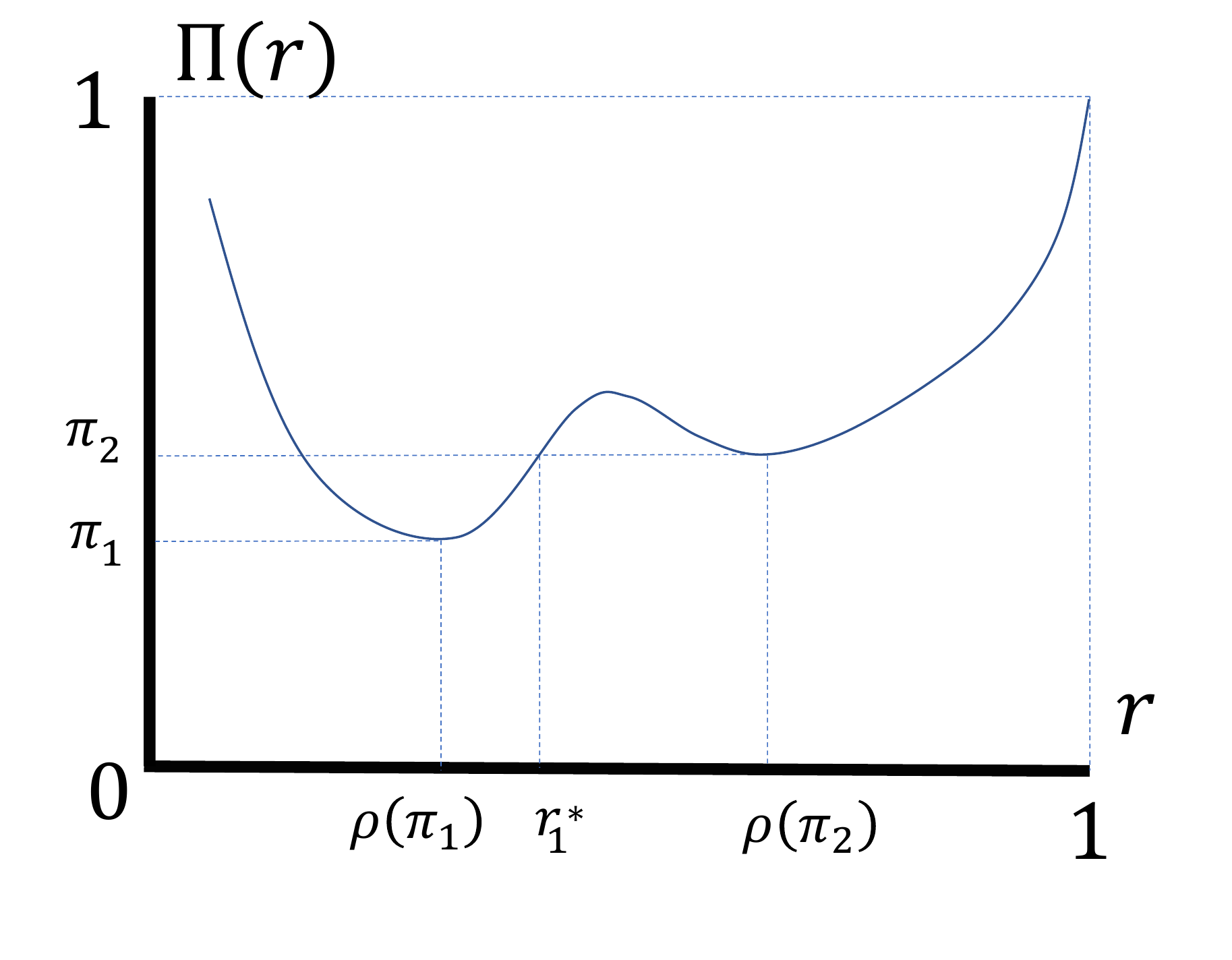}
    \end{subfigure}{(A)}
    \begin{subfigure}
    \includegraphics[width=0.4\textwidth]{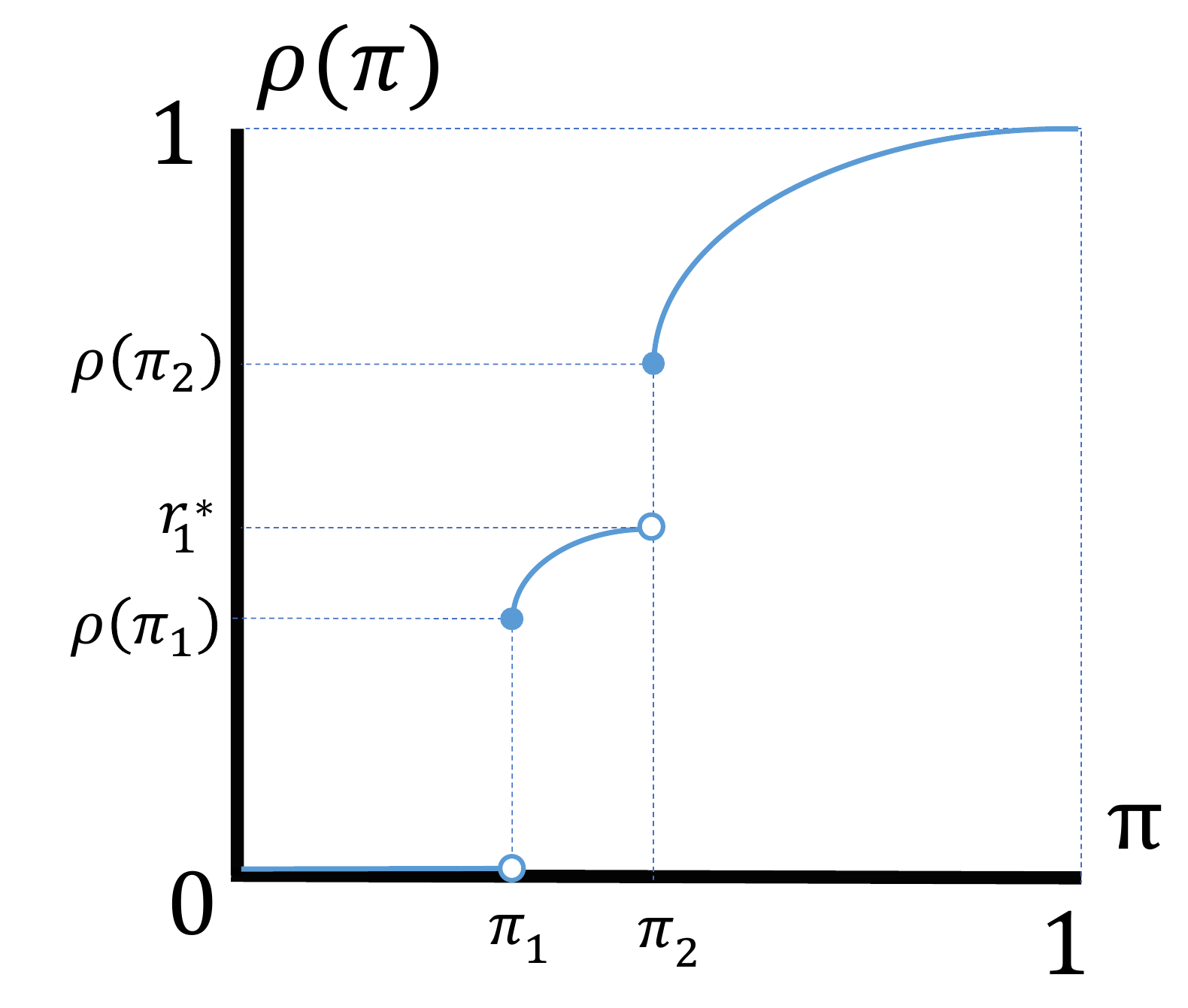}
    \end{subfigure}{(B)}
\caption{Relationship between $\Pi(r)$ and $\rho(\pi)$ in the case with $\rho(\pi)$ having two discontinuities at $\pi_1$ and $\pi_2$. Panel (A): The function $\Pi(r)$. Panel (B): The corresponding $\rho(\pi)$. We see that the local minima of $\Pi(r)$ correspond to the discontinuities of $\rho(\pi)$.}
\label{fig:Pi_r_and_rho_pi}
\end{figure}

We now formally show that for generic probability mass functions $p$ and $q$, supported on $\{1,2, ... \bar m\}$ and $\{1,2,...,\bar n\}$, $\Pi(r)$ has no inflexion point. Making the dependence on $p$ and $q$ explicit by writing $\Pi(r,p,q)$, this is equivalent to showing that the equation $\frac{d \Pi(r,p,q)}{d r}=0$ is ``regular'' at almost every $p$ and $q$ on the above-defined supports. That is, for almost all $p$ and $q$, $\frac{d \Pi(r,p,q)}{d r}=0$ should imply that $\frac{d^2 \Pi(r,p,q)}{d r^2} \neq 0$. By the transversality theorem, this will hold true if the Jacobian $J_{r,p,q} \Big[\frac{d \Pi(r,p,q)}{d r} \Big]$ has full row rank (here, this means rank one) whenever $\frac{d \Pi(r,p,q)}{d r}=0$. Thus, all we need to verify is that $J_{r,p,q} \Big[\frac{d \Pi(r,p,q)}{d r} \Big] \neq 0$ under that condition.

Denoting $\frac{d \Pi(r,p,q)}{d r}$ by $\Pi'(r,p,q)$, we have that

\begin{equation}
\label{eq:Jacobian}
J_{r,p,q} \Big[\Pi'(r,p,q)\Big] = \Big[ \frac{d \Pi'(r,p,q)}{d r}, \frac{d \Pi'(r,p,q)}{d p(1)}, \frac{d \Pi'(r,p,q)}{d p(2)}, ..., \frac{d \Pi'(r,p,q)}{d q(1)}, \frac{d \Pi'(r,p,q)}{d q(2)}, ... \Big].
\end{equation}

It is sufficient to show that the first element $\frac{d \Pi'(r,p,q)}{d r}$ of this vector is never equal to zero when $\frac{d \Pi(r,p,q)}{d r}=0$. We can express this first element as

\begin{eqnarray}
\label{eq:pi_2nd_deriv}
  \frac{d \Pi'(r,p,q)}{d r} &=&  \frac{ -{G_q^{-1}}'' (1-G_p^{-1}(r)) ({G_p^{-1}}'(r))^2  + {G_p^{-1}}''(r) {G_q^{-1}}'(1-G_p^{-1}(r)) }{r}  \nonumber \\
  && -  \frac{{G_q^{-1}}'(1 - G_p^{-1}(r)) {G_p^{-1}}'(r)}{r^2} \nonumber \\
  &&  +\frac{ -{G_q^{-1}}'(1-G_p^{-1}(r)) {G_p^{-1}}'(r) r^2 + 2r(1 - G_q^{-1}(1-G_p^{-1}(r))) }{r^4}
\end{eqnarray}
while the $\frac{d \Pi(r,p,q)}{d r}=0$ condition can be expressed as

\begin{equation}
\label{eq:pi_1st_deriv}
\frac{d \Pi(r,p,q)}{d r}= \frac{  {G_q^{-1}}' (1-G_p^{-1}(r)) {G_p^{-1}}'(r) r - (1 - G_q^{-1} (1-G_p^{-1}(r))) }{r^2}=0.
\end{equation}

Using (\ref{eq:pi_1st_deriv}) in the third term of (\ref{eq:pi_2nd_deriv}) yields the following simplification of $\frac{d \Pi'(r,p,q)}{d r}$:

\begin{eqnarray}
\label{eq:pi_2nd_deriv_simpl}
  \frac{d \Pi'(r,p,q)}{d r} &=&   \frac{ -{G_q^{-1}}'' (1-G_p^{-1}(r)) ({G_p^{-1}}'(r))^2  + {G_p^{-1}}''(r) {G_q^{-1}}'(1-G_p^{-1}(r)) }{r} \nonumber \\
 &=& -\frac{{\Big(G_q^{-1}(1 - G_p^{-1}(r)) \Big)}''}{r} >0
\end{eqnarray}
where the first equality follows from the previously mentioned substitution, the second equality follows from integrating the numerator twice, while the final inequality follows from the fact that ${G_q^{-1}}''( r) < 0$ for all $ r \in [0,1]$. We can therefore conclude that the Jacobian is always of full row rank when $\frac{d \Pi(r,p,q)}{d r}=0$, as required, and thus $ \Pi(r)$ has no inflexion point.

\em Part (3): \em

By the previous part and by the definition of $\rho(\pi)$, we know that at any point of discontinuity $\pi_i$, $\rho(\pi_i) = r_i$ is the limit of $\rho(\pi)$ when $\pi$ approaches $\pi_i$ from the right, i.e. $\lim_{\pi \downarrow \pi_i} \rho(\pi) = r_i$.

From part (2), we also know that $\rho(\pi)$ is continuous, except at the points of discontinuity. Combining these two arguments, it thus follows that at any given point $\pi \in [0,1)$, for any $\eta>0$ there exists $\xi>0$ such that $|\rho(x) - \rho(\pi)| < \eta$ for all $x \in (\pi,\pi+\xi)$. Therefore, $\rho(\pi)$ is right-continuous.

\em Part (4): \em

We will first show that if $\rho(\pi)$ is discontinuous at $\pi_i$, then $\lim_{\pi\downarrow\pi_i}\rho'(\pi)=\infty$. Using the same notation as in part (2), first note that because
$\Pi(r)$ achieves a minimum at $r_i$ on the interval
$[r_i,r_i^*]$, there must be nonempty intervals $(r_i,r_i^{+})$
and $(\pi_i,\pi_i^{+})$ such that $\Pi(r)$ is strictly increasing and thus equal to the inverse of the (also strictly increasing) function $\rho(\pi)$. Moreover, because $\Pi(r)$ is continuously differentiable and a discontinuity in $\rho(\pi)$ at $\pi_i$ is associated with a local minimum of $\Pi(r)$ at $r_i$, then $\Pi(r)$ is locally convex and
the derivative of $\Pi(r)$ at points just to the right of
$r_i$ approaches $0$, i.e. $\lim_{r \downarrow r_i} \Pi'(r)= 0$. Thus, by the implicit function theorem,
the derivative of $\rho(\pi)$ tends to $\infty$ as $\pi$ tends to $\pi_i$
from above, i.e. $\lim_{\pi \downarrow \pi_i} \rho'(\pi)= \infty$.

We now show that  $\lim_{\pi\downarrow\pi_i}\rho'(\pi)=\infty$ implies discontinuity of $\rho(\pi)$ at $\pi_i$. If $\lim_{\pi \downarrow \pi_i} \rho'(\pi)= \infty$, then  $\lim_{r \downarrow r_i} \Pi'(r)= 0$ (again by the implicit function theorem) and $\pi_i$ is thus a local minimum of $\Pi(r)$ (since inflexion points have been ruled out in part (2)). From the definition of $\rho(\pi)$, it follows that it is discontinuous at $\pi_i$.

Now, since $\Pi(r)$ is locally convex and increasing on $(r_i,r_i^{+})$, then $\rho(\pi)$ (the inverse function) is locally concave and increasing on $(\pi_i,\pi_i^{+})$.

Also note that, because $G_{p}$ and $G_{q}$ are polynomials (i.e., the generating
function has only finitely many nonzero coefficients), the function
$\Pi(r)$ has a derivative that changes sign only finitely
many times, and thus $\rho(\pi)$ has only finitely many discontinuities.

\subsection{Proof of Lemma SA \ref{lem:discontinuities}}

\em Part (1) \em : Note that
\begin{eqnarray}
\rho_L(\pi) &=& G_p(1-G_q(1 - \pi \rho_{L-1}(\pi)))\\
 &=& \sum_{m =0}^\infty p(m)\left[1- \sum_{n=0}^\infty q(n)\left(1-\pi \rho_{L-1}(\pi)\right)^n\right]^m
\end{eqnarray}
with $\rho_1(\pi)= 1$. Thus, $\rho_2(\pi)$ is a strictly increasing function of $\pi$. If $\rho_{L-1}(\pi)$ is strictly increasing in $\pi$, then so is $\rho_L(\pi)$. It thus follows by induction that $\rho_L(\pi)$ is strictly increasing in $\pi$ for any $L$.

Note also that $\rho_L(\pi)$ is a polynomial in the variable $\pi$ for any $L$ and thus it is infinitely differentiable.

\em Part (2) \em :

$\{\rho_L(\pi)\}^{\infty}_{L=1}$ is a sequence of monotonically %
decreasing (in $L$) and differentiable functions (in $\pi$) converging pointwise to $\rho(\pi)$. From Lemma SA \ref{lemma:rho_properties}, we know that $\rho(\pi)$ is continuous and has finite derivative on each of those intervals. Therefore, using Dini's theorem, $\{\rho_L(\pi)\}^{\infty}_{L=1}$ converges uniformly  to $\rho(\pi)$ on each of those intervals. %

\subsection{Proof of Lemma SA \ref{lem:ZeroExpect}}

By Lemma SA \ref{lemma:rho_properties} and Assumption SA \ref{ass:payoff} parts (ii)(b) and (iii), for all $s\in[\underline s, \overline s]$ but a measure zero subset we have that
$$\lim_{k\rightarrow \infty} \left[\bigg(\frac{\partial\rho(\pi_k(x,\underline{\pi},s))}{\partial \pi_k(x,\underline{\pi},s)}\bigg)\bigg(\frac{\partial\pi_k(x,\underline{\pi},s)}{\partial x_i}\bigg)\right]=0.$$

We therefore have that

$$\lim_{k\rightarrow \infty} \mathbb{E}\left[\frac{\partial\rho(\pi_k(x,\underline{\pi},s))}{\partial \pi_k(x,\underline{\pi},s)}\frac{\partial\pi_k(x,\underline{\pi},s)}{\partial x_i}\right]=0.$$

\subsection{Proof of Lemma SA \ref{lem:degenerate_pq}}

With fixed $m$ and $n$, this case corresponds to Lemma \ref{ShortPaper-lem:rho_disc} in the main paper. The reader is thus referred to the proof of Lemma \ref{ShortPaper-lem:rho_disc}.

\bibliography{main}
\bibliographystyle{ecta}